\newtheorem{theorem}{Theorem}
\newtheorem{definition}[theorem]{Definition}
\newtheorem{lemma}[theorem]{Lemma}
\newtheorem{proposition}[theorem]{Proposition}
\newtheorem{remark}[theorem]{Remark}
\newenvironment{proof}[1][Proof]{\textbf{#1.} }{\ \rule{0.5em}{0.5em}}
\begin{document}

\title{An elementary approach to Stochastic Differential Equations using the
infinitesimals.}
\author{Vieri Benci\thanks{%
Dipartimento di Matematica Applicata "U.Dini" Via F. Buonarroti 1/c , I -
56127 Pisa \texttt{benci@dma.unipi.it}, \texttt{galatolo@dm.unipi.it},
\texttt{ghimenti@dm.unipi.it} }, Stefano Galatolo\addtocounter{footnote}{-1}%
\footnotemark {}, and Marco Ghimenti\addtocounter{footnote}{-1}\footnotemark %
{} }
\maketitle
\tableofcontents

\section{Introduction}

Suppose that $x$ is a physical quantity whose evolution is governed by a
deterministic force which has small random fluctuations; such a phenomenon
can be described by the following equation
\begin{equation}
\dot{x}=f(x)+h(x)\xi (t)  \label{1}
\end{equation}%
where $\dot{x}=\frac{dx}{dt},$ and $\xi $ is a "white noise". Intuitively, a
\textit{white noise} is the derivative of a \textit{Brownian motion}, namely
a continuous function which is not differentiable in any point.

There is no function $\xi $ which has such a property, actually the mathematical
object which models $\xi $ is a distribution. Thus equation (\ref{1}) makes
sense if it lives in the world of distributions.

On the other hand the kind of problems which an applied mathematician asks
are of the following type. Suppose that $x(0)=0$ and that $\xi (t)$ is a
random noise of which only the statistical properties are known. What is the
probability distribution $P(t,x)$ of $x$ at the time $t?$

This question can be formalized by the theory of stochastic differential
equation and eq. (\ref{1}) takes the form
\begin{equation}
dx=f(x)dt+h(x)dw.  \label{2}
\end{equation}%
thus, the white noise $dw$ is regarded as the "differential" of a Wiener
process $w.$ In this case, both $x(t)$ and $w(t)$ are modelled, not by
distributions, but by stochastic processes.

By the Ito theory, the above question can be solved rigorously: the
probability distribution can be determined solving the Fokker-Plank
equation:
\begin{equation}
\frac{\partial P}{\partial t}=\frac{1}{2}\frac{\partial ^{2}}{\partial x^{2}}%
\left( h(x)^2P\right) +\frac{\partial }{\partial x}\left( f(x)P\right)
\label{FP}
\end{equation}

Eq. (\ref{1}) (or (\ref{2})) and (\ref{FP}) are very relevant in
applications of Mathematics and the practitioners of mathematics such as
engineers, physicists, economists, etc. make a large use of it. However the
mathematics used in these equations is rather involved and many of them are
not able to control it.

Usually people think of some intuitively simpler model. For example, $\xi
(t) $ is considered as a force which acts at discrete instants of time $%
t_{i} $; it is supposed that the difference of two successive times $%
dt=t_{i+1}-t_{i}$ is infinitesimal and its strength is infinite; namely
\begin{equation}
\xi (t)=\pm \frac{1}{\sqrt{dt}}  \label{3}
\end{equation}

The sign of this force is determined by a fair coin tossing. Clearly eq. (%
\ref{3}) makes no sense and the gap between the rigorous mathematical
description and the intuitive model is quite large.

The main purpose of this paper is to reduce this gap. We will use infinite
and infinitesimal numbers in such a way that eqs. (\ref{3}) and (\ref{1})
make sense and, in this framework, we will deduce eq.(\ref{FP}) rigorously.
Our proof is relatively simple and very close to intuition.

The use of infinite and infinitesimal numbers naturally leads to Nonstandard
Analysis (\textit{NSA}). Actually, some mathematicians have described the
stochastic differential equation by Nonstandard Analysis (cf. e.g. \cite{and}%
, \cite{kei}, \cite{albe}, \cite{nel} and references therein). However the
machinery of $NSA$ is too complicate for practitioners of mathematics even
if its ideas are simpler. A good knowledge of formal logic is necessary to
follow a $NSA$ proof, in fact, the main tool is the transfer principle
which, in order to be applied correctly, needs the notions of \textit{formal
language} and \textit{interpretation}.

In this paper we will not use $NSA$ but $\alpha $-theory which is a kind of
simplified version of Nonstandard Analysis. $\alpha $-theory has been
introduced in \cite{bdn} (see also \cite{bb} and \cite{bda}) with the
purpose to provide a simpler approach to $NSA$. In fact, in the quoted paper
it has been proved that a particular model of $NSA$ can be deduced by the
axioms of $\alpha $-theory (we refer also to \cite{bdn2} and to \cite{bfdn}
for the reader interested to investigate in these questions).

The main differences between $\alpha $-theory and the usual Nonstandard
Analysis are two:

\begin{itemize}
\item $\alpha $-theory does not need the language (and the knowledge) of
symbolic logic;

\item it does not need to distinguish two mathematical universes, (the
standard universe and the nonstandard one).
\end{itemize}

$\alpha $-theory postulates the "existence" of an infinite integer number
called $\alpha $ and it provides the rules necessary to deal with the
mathematical objects which can be constructed by its introduction. For
example, $\alpha $-theory allows to define functions such as$\ "\sin \left(
\alpha t\right) "$ and to manage with it. $\alpha $-theory is not as
powerful as $NSA,$ but it is simpler and it allows to treat many problems by
an elementary and rigorous use of infinite and infinitesimal numbers.

In particular, using this theory, it is possible to define the  "grid
functions" which are functions defined for times $t_{i}$ belonging to a set $%
\mathbb{H}$ which models the axis of time. Using the notion of grid
function, we are able to give a sense to (\ref{1}) and (\ref{3}) and to
deduce eq. (\ref{FP}) rigorously.

Our approach presents the following peculiarities:

\begin{itemize}
\item we will rewrite eq. (\ref{1}) as a "grid" differential equation:
\begin{equation}
\frac{\Delta x}{\Delta t}=f(x)+\xi (t)  \label{1+}
\end{equation}

where $\frac{\Delta x}{\Delta t}$ denotes the grid derivative (see Def. \ref%
{dergriglia}). From this equation, it easy to recover both a distribution
equation and a stochastic equation, and, at the same time, eq. (\ref{1+}) has a very
intuitive meaning.

\item when eq. (\ref{1+}) is considered from the stochastic point of view,
the noise $\xi $ is regarded as a \textit{grid function} belonging to the
space of all possible noises $\mathcal{R}.$ If $\xi $ is regarded as a
random variable, the probability on the sample space $\mathcal{R}$ can be
defined in a naive way, namely every noise has the same probability. This is
the basic idea of the Loeb measure (\cite{loeb}) which is an important tool
in the applications of $NSA$ , but we do not need to use it. Actually we do
not need to introduce any kind of measure.
\end{itemize}

\section{The Alpha-Calculus}

\subsection{Basic notions of Alpha-Theory}

In this section, we will expose the basic facts of $\alpha $-theory and the
basic tools which will be used in the paper in a elementary and self
contained way.

$\alpha $-theory is based on the existence of a new mathematical object,
namely $\mathbf{\alpha }$ which is added to the other entities of the
mathematical universe. We may think of $\mathbf{\alpha }$ as a new
\textquotedblleft \textit{ideal}\textquotedblright\ natural number added to $%
{\mathbb{N}}$, in a similar way as the imaginary unit $i$ can be seen as a
new ideal number added to the real numbers ${\mathbb{R}}$. Before going to
the axioms for $\mathbf{\alpha } $, we remark that \textit{all} usual
principles of mathematics are implicitly assumed. Informally, we can say
that, by adopting $\alpha $-theory, one can construct sets and functions
according to the \textquotedblleft usual\textquotedblright\ practice of
mathematics, with no restrictions whatsoever. A precise definition of what
we mean by \textquotedblleft usual principles of
mathematics\textquotedblright\ (i.e. of our underlying set theory) is given
in \cite{bdn}.

Like the use of the imaginary entity $i$ is governed by simple properties
like $i^2=-1$ and the usual rules for the product and sum, the use of $%
\mathbf{\alpha }$ is governed by the following five axioms. \newline

\noindent $\mathbf{\alpha 1}$\textbf{.\ Extension Axiom.} \newline
\textit{Every sequence }$\varphi $ \textit{can be uniquely extended to }${%
\mathbb{N}}\cup \left\{ \mathbf{\alpha }\right\} $\textit{. The
corresponding value at }$\mathbf{\alpha }$ \textit{will be denoted by }$%
\varphi (\mathbf{\alpha )}$\textit{\ and called the} \textit{value of} $%
\varphi $\textit{\ at the point }$\mathbf{\alpha }$ \textit{or more simply}
\textit{\textquotedblleft }$\alpha $-\textit{value\textquotedblright . If
two sequences }$\varphi ,\psi $ \textit{are different at all points, then }$%
\varphi (\mathbf{\alpha })\neq \psi (\mathbf{\alpha })$. \newline

\bigskip

We remark that if $\varphi :{\mathbb{N}}\rightarrow A$, then in general $%
\varphi (\mathbf{\alpha })\notin A$. The \textquotedblleft difference
preserving\textquotedblright\ condition given above can be rephrased as
follows: \textquotedblleft If two sequences are different at all $n$ then
they must be different at the point "$\mathbf{\alpha }$" as
well\textquotedblright . It is a non-triviality condition, that will allow
plenty of values at infinity. Moreover remark that the $\alpha $-value of a
sequence should not be confused with its limit. In fact, the $\alpha $-value
differs from a limit even for this first axiom; in fact different sequences
might have the same limit.

\bigskip

The next axiom gives a natural coherence property with respect to
compositions. If $g:A\rightarrow B$ and $h:B\rightarrow C$, denote by $%
h\circ g:A\rightarrow C$ the composition of $h$ and $g$, i.e. $(h\circ
g)(x)=h(g(x))$. \bigskip

\noindent $\mathbf{\alpha 2}$\textbf{.\ Composition Axiom.} \newline
\textit{If }$\varphi $\textit{\ and }$\psi $\textit{\ are sequences and if }$%
f$ \textit{is any function such that compositions }$f\circ \varphi $\textit{%
\ and }$f\circ \psi $\textit{\ make sense, then} \newline
\begin{equation*}
\varphi (\mathbf{\alpha })=\psi (\mathbf{\alpha })\Rightarrow (f\circ
\varphi )(\mathbf{\alpha })=(f\circ \psi )(\mathbf{\alpha })\mathit{\ }
\end{equation*}%
\newline
So, if two sequences takes the same value at infinity, by composing them
with any function we again get sequences with the same $\alpha$-value.
\bigskip

\noindent $\mathbf{\alpha 3}$\textbf{.\ Real Number Axiom.} \newline
\textit{If }$c_{r}:n\mapsto r$ \textit{is the constant sequence with value }$%
r$\textit{, then }$c_{r}(\mathbf{\alpha })=r$\textit{; if }$1_{{\mathbb{N}}%
}:n\mapsto n$\textit{\ is the immersion of }${\mathbb{N}}$ in $\mathbb{R}$%
\textit{, then }$1_{{\mathbb{N}}}(\mathbf{\alpha })=\mathbf{\alpha }\notin {%
\mathbb{R}}$.

\bigskip

This axiom simply says that, for real numbers, the notions of constant
sequence is preserved at infinity. The latter condition says that the ideal
number $\mathbf{\alpha }$ is actually a \textit{new} number. Thus the
immersion $1_{{\mathbb{N}}}$ provides a first example of sequence $\varphi :{%
\mathbb{N}\rightarrow \mathbb{N}}$ such that $\varphi (\mathbf{\alpha }%
)\notin {\mathbb{N}}$.

\bigskip

\noindent $\mathbf{\alpha 4}$\textbf{.\ Internal Set Axiom.} \newline
\textit{If }$\psi $\textit{\ is a sequence of sets,} \textit{then also }$%
\psi (\mathbf{\alpha })$\textit{\ is a set and}%
\begin{equation*}
\psi (\mathbf{\alpha })=\left\{ \varphi (\mathbf{\alpha }):\varphi (n)\in
\psi (n)\ \text{\textit{for all}}\ n\right\} .
\end{equation*}

Thus, the membership relation is preserved at infinity. That is, if $\varphi
(n)\in \psi (n)$ for all $n$, then $\varphi (\mathbf{\alpha })\in \psi (%
\mathbf{\alpha })$. Besides, all elements of $\psi (\mathbf{\alpha })$ are
obtained in this way. That is, they all are values at infinity of sequences
which are pointwise members of $\psi $. The set considered above will be
called \emph{Internal sets}.

\bigskip

\noindent $\mathbf{\alpha 5}$\textbf{.\ Pair Axiom.} \newline
\textit{If }$\vartheta (n)=$\textit{\ }$\left\{ \varphi (n),\psi (n)\right\}
$\textit{\ for all }$n$, \textit{then }$\vartheta (\mathbf{\alpha })=$%
\textit{\ }$\left\{ \varphi (\mathbf{\alpha }),\psi (\mathbf{\alpha }%
)\right\} $\textit{.}

\bigskip

Thus, if the sequence $\xi $\ is such that either $\xi (n)=$\ $\varphi (n)$\
or $\xi (n)=$\ $\psi (n)$\ for all $n$,\ then either $\xi (\mathbf{\alpha }%
)= $\ $\psi (\mathbf{\alpha })$\ or $\xi (\mathbf{\alpha })=$\ $\psi (%
\mathbf{\alpha })$ at infinity as well. As a straight consequence of the
last two axioms, any constant sequence with value a finite set of natural
numbers, or a finite set of finite sets of natural numbers etc., takes the
same value at infinity as well. We remark that this is not true in general.

\

We remark that the above five axioms are given somewhat \textquotedblleft
informally\textquotedblright . Precise indications for a rigorous
formulation as sentences of a suitable first-order language are given in
\cite{bdn}. Also, we refer to \cite{bdn} for the proves of the propositions
below, but we suggest the reader to try them by himself to get acquainted
with $\alpha $-theory.

\begin{definition}
\label{one}If $A$ is a set, the $\ast $-transform of $A$ is defined as
follows:

\begin{equation*}
A^{\ast }=\{\varphi (\mathbf{\alpha }):\varphi :{\mathbb{N}}\rightarrow A\}.
\end{equation*}
\end{definition}

If $\psi \ $is a sequence such that $\psi (n)=A\ $\textit{for all}$\ n,$then
by the Internal Set Axiom, we have that $\psi (\mathbf{\alpha })=A^{\ast }.$
Then constant set-valued sequences behave differently than real valued
sequences (cf. the Real Number Axiom).

\begin{definition}
The set of the \textit{hyperreal numbers} is the $\ast $-transform of the
set of the real numbers:%
\begin{equation*}
{\mathbb{R}}^{\ast }=\{\varphi (\mathbf{\alpha }):\varphi :{\mathbb{N}}%
\rightarrow {\mathbb{R}}\}.
\end{equation*}
\end{definition}

In other words, the hyperreal numbers are the $\alpha $-values assumed
by real sequences. With obvious notation, for instance we will write $\sin
\frac{2}{\mathbf{\alpha }}$ to mean the hyperreal number obtained as the
values at infinity of the sequence $\left\{ \sin \frac{2}{n}\right\} _{n\in
{\mathbb{N}}}$.

The sum and product operation are naturally transported on the hyperreal set
moreover we have the following:

\begin{proposition}
\textit{The hyperreal number system }$\left\langle {\mathbb{R}}^{\ast
};+,\cdot ,0,1,<\right\rangle $ \textit{is an ordered field.}
\end{proposition}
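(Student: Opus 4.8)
The plan is to carry the ordered-field structure of $\mathbb{R}$ over to $\mathbb{R}^\ast$ along the $\alpha$-map, reducing each field axiom to a statement about sequences and then invoking Axioms $\mathbf{\alpha 1}$--$\mathbf{\alpha 5}$.

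First I would make the operations explicit. Given $x,y\in\mathbb{R}^\ast$, choose real sequences $\varphi,\psi$ with $x=\varphi(\mathbf{\alpha})$ and $y=\psi(\mathbf{\alpha})$, and set $x+y:=(\varphi+\psi)(\mathbf{\alpha})$ and $x\cdot y:=(\varphi\psi)(\mathbf{\alpha})$, where $(\varphi+\psi)(n)=\varphi(n)+\psi(n)$ and $(\varphi\psi)(n)=\varphi(n)\psi(n)$. The first thing to check is independence of the representatives: by the Pair Axiom (iterated, through the Kuratowski coding of ordered pairs) the sequence $n\mapsto(\varphi(n),\psi(n))$ has $\alpha$-value $(x,y)$, so composing it with the real functions $+,\cdot\colon\mathbb{R}^2\to\mathbb{R}$ and applying the Composition Axiom shows that $x+y$ and $x\cdot y$ depend on $x$ and $y$ only. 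The neutral elements are $0=c_0(\mathbf{\alpha})$ and $1=c_1(\mathbf{\alpha})$ by the Real Number Axiom, and $0\neq 1$ because $c_0$ and $c_1$ differ at every point, so the Extension Axiom forces $c_0(\mathbf{\alpha})\neq c_1(\mathbf{\alpha})$.

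Next, the purely equational axioms come almost for free. Commutativity and associativity of $+$ and $\cdot$ and distributivity are identities valid at every $n$, so for any representatives the two sides of each identity are \emph{the same} real sequence; hence they take the same $\alpha$-value, which by the previous step is precisely the corresponding identity in $\mathbb{R}^\ast$. Additive inverses are equally direct: $-x:=(-\varphi)(\mathbf{\alpha})$ works, since $\varphi+(-\varphi)=c_0$ pointwise.

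The substantive points are the existence of multiplicative inverses of nonzero elements and the totality of the order, and both rest on a single lemma: the family $\mathcal{U}:=\{A\subseteq\mathbb{N}:\chi_A(\mathbf{\alpha})=1\}$ of $\alpha$-large sets is an ultrafilter on $\mathbb{N}$, and $\varphi(\mathbf{\alpha})=\psi(\mathbf{\alpha})$ holds iff $\{n:\varphi(n)=\psi(n)\}\in\mathcal{U}$. For the ultrafilter property one observes that $\chi_A$ takes values in the finite set $\{0,1\}$, so $\chi_A(\mathbf{\alpha})\in\{0,1\}^\ast=\{0,1\}$ by the Internal Set Axiom and the remark following the Pair Axiom; since $\chi_A+\chi_{\mathbb{N}\setminus A}=c_1$ pointwise, exactly one of $A$, $\mathbb{N}\setminus A$ belongs to $\mathcal{U}$, and upward closure and stability under finite intersections follow by composing indicator sequences with suitable $\{0,1\}$-valued functions. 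For the characterisation, $(\Leftarrow)$ I would get from the Composition Axiom applied to $f\colon\mathbb{R}\times\{0,1\}\to\mathbb{R}$ with $f(t,1)=t$ and $f(t,0)=0$: if $\varphi$ and $\psi$ agree on $A$, then $n\mapsto f(\varphi(n),\chi_A(n))$ and $n\mapsto f(\psi(n),\chi_A(n))$ are literally the same sequence, while evaluating the first as a function of the pair $(x,\chi_A(\mathbf{\alpha}))=(x,1)$ returns $x$ (because $f(\,\cdot\,,1)$ is the identity and $c_1(\mathbf{\alpha})=1$), and likewise $y$ for $\psi$, whence $x=y$; and $(\Rightarrow)$ follows from the Extension Axiom, by redefining $\varphi$ on the complement of the agreement set so that it differs from $\psi$ at every point. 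Granting the lemma, if $x=\varphi(\mathbf{\alpha})\neq 0$ then $\{n:\varphi(n)=0\}\notin\mathcal{U}$, so replacing $\varphi(n)$ by $1$ on that set gives a representative $\varphi_1$ of $x$ that is nowhere zero, and $n\mapsto 1/\varphi_1(n)$ is then a genuine real sequence whose $\alpha$-value inverts $x$. Finally, declaring $x<y$ iff $\{n:\varphi(n)<\psi(n)\}\in\mathcal{U}$ yields a well-defined (by the characterisation together with stability under intersection) total order—trichotomy because exactly one block of a finite partition of $\mathbb{N}$ lies in $\mathcal{U}$—which is compatible with $+$ and $\cdot$ by upward closure. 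I expect this lemma about $\mathcal{U}$ to be the real obstacle; once it is in hand, everything else is bookkeeping with the five axioms.
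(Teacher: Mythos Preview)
The paper itself does not prove this proposition; it explicitly defers to \cite{bdn} and invites the reader to supply the argument. Your proof is essentially correct and is precisely the intended one: you extract from Axioms $\mathbf{\alpha 1}$--$\mathbf{\alpha 5}$ the ultrafilter $\mathcal{U}$ of ``$\alpha$-large'' sets, prove the characterisation $\varphi(\mathbf{\alpha})=\psi(\mathbf{\alpha})\Longleftrightarrow\{n:\varphi(n)=\psi(n)\}\in\mathcal{U}$, and then verify the ordered-field axioms exactly as for the ultrapower $\mathbb{R}^{\mathbb{N}}/\mathcal{U}$, which is indeed how $\mathbb{R}^{\ast}$ is realised in \cite{bdn}.

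One small slip to fix: in the $(\Rightarrow)$ direction of the characterisation you wrote ``redefining $\varphi$ on the complement of the agreement set,'' but the argument requires the opposite. Assuming $A=\{n:\varphi(n)=\psi(n)\}\notin\mathcal{U}$, one redefines $\varphi$ \emph{on $A$} (e.g.\ setting $\varphi'(n)=\psi(n)+1$ there) so that the new sequence $\varphi'$ agrees with $\varphi$ on $\mathbb{N}\setminus A\in\mathcal{U}$---whence $\varphi'(\mathbf{\alpha})=\varphi(\mathbf{\alpha})$ by the already-proved $(\Leftarrow)$---yet differs from $\psi$ at \emph{every} $n$, whence $\varphi'(\mathbf{\alpha})\neq\psi(\mathbf{\alpha})$ by the Extension Axiom. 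With that correction every step goes through.
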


Besides the considered sets of hyper-numbers, another fundamental notion in
nonstandard analysis is the following.

\begin{definition}
A set $\Gamma \subset A^{\ast }$ is called \textit{hyperfinite }if
\begin{equation*}
\Gamma =\left\{ \varphi (\alpha ):\varphi (n)\in A_{n}\right\}
\end{equation*}%
where\textit{\ }$A_{n}\subset A$\textit{\ }is a sequence of finite sets.
Given a hyperfinite set $\Gamma $, we define its cardinality $\left\vert
\Gamma \right\vert $ as follows:%
\begin{equation*}
\left\vert \Gamma \right\vert =\psi \left( \alpha \right) \in \mathbb{N}%
^{\ast }
\end{equation*}%
where $\psi \left( n\right) =\left\vert A_{n}\right\vert $ is the
cardinality of the finite set $A_{n}.$
\end{definition}

In general hyperfinite sets are infinite; their importance relies in the
fact that they retain all \textquotedblleft elementary\textquotedblright\
properties of finite sets. Applications of hyperfinite sets will be given in
subsequent sections, for example the following hold

\begin{proposition}
Every nonempty hyperfinite subset of ${\mathbb{R}}^{\ast }$\textit{\ has a
greatest and a smallest element.}
\end{proposition}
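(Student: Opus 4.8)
The plan is to reduce the statement about a hyperfinite subset of $\mathbb{R}^{\ast}$ to the corresponding elementary fact about finite subsets of $\mathbb{R}$, transported through $\alpha$. By definition, a nonempty hyperfinite set $\Gamma$ has the form $\Gamma=\{\varphi(\alpha):\varphi(n)\in A_n\}$ where each $A_n\subset\mathbb{R}$ is finite. First I would like to arrange that each $A_n$ is itself nonempty, so that "greatest element" makes sense pointwise. Since $\Gamma\neq\emptyset$, there is at least one sequence $\varphi$ with $\varphi(n)\in A_n$ for all $n$; hence every $A_n$ along this witnessing sequence is nonempty. If some $A_n$ were empty, the Internal Set Axiom would force $\Gamma=\emptyset$, a contradiction; so in fact every $A_n$ is nonempty.

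Next I would define the pointwise maximum sequence $M(n)=\max A_n$, which exists because each $A_n$ is a nonempty finite subset of $\mathbb{R}$ and $\mathbb{R}$ is totally ordered. Set $M=M(\alpha)\in\mathbb{R}^{\ast}$. I claim $M$ is the greatest element of $\Gamma$. To see $M\in\Gamma$: the sequence $n\mapsto M(n)$ satisfies $M(n)\in A_n$ for all $n$, so by the very definition of $\Gamma$ (equivalently, by the Internal Set Axiom applied to the constant-in-structure sequence $n\mapsto A_n$), $M(\alpha)\in\Gamma$. To see $M$ dominates every element of $\Gamma$: take any $x\in\Gamma$, so $x=\varphi(\alpha)$ with $\varphi(n)\in A_n$ for all $n$; then $\varphi(n)\le M(n)$ for every $n\in\mathbb{N}$. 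Now I would invoke the transfer of the order relation: the inequality $\varphi(n)\le M(n)$ holding for all $n$ gives, via the Composition Axiom applied to the characteristic behaviour of $\le$ on $\mathbb{R}$ (or simply because $\le$ is transported to an order on $\mathbb{R}^{\ast}$ by the Proposition that $\mathbb{R}^{\ast}$ is an ordered field), that $\varphi(\alpha)\le M(\alpha)$, i.e. $x\le M$. The argument for the smallest element is identical with $m(n)=\min A_n$.

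The one step that needs a little care — and which I expect to be the main obstacle in a fully rigorous write-up — is the passage "$\varphi(n)\le M(n)$ for all $n$ implies $\varphi(\alpha)\le M(\alpha)$". This is where the order structure on $\mathbb{R}^{\ast}$ has to be used honestly: it is not immediate from $\alpha 1$–$\alpha 5$ as bare statements but follows from the Internal Set Axiom applied to the sequence of sets $\{(a,b)\in\mathbb{R}^2:a\le b\}$, whose $\alpha$-value is precisely the graph of the order on $\mathbb{R}^{\ast}$; since $(\varphi(n),M(n))$ lies in this set for all $n$, the pair $(\varphi(\alpha),M(\alpha))$ lies in its $\ast$-transform, which is exactly $\varphi(\alpha)\le M(\alpha)$. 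Once this compatibility of $\le$ with $\alpha$-values is in hand, the rest is bookkeeping. Finally, I would note that $M$ and $m$ so constructed are independent of the particular representing sequence of finite sets $A_n$: any two representations agree at $\alpha$ by the Extension Axiom together with the fact that the pointwise $\max$ (resp. $\min$) is determined by $\Gamma$ up to the same $\alpha$-identification, so the greatest and smallest elements are well defined elements of $\Gamma$ itself.
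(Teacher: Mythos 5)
Your proposal is correct: taking the pointwise maximum $M(n)=\max A_n$ (each $A_n$ being nonempty and finite), noting $M(\alpha)\in\Gamma$ by the Internal Set Axiom, and transferring $\varphi(n)\le M(n)$ to $\varphi(\alpha)\le M(\alpha)$ via the $\alpha$-value of the (constant) sequence of order relations together with the Pair Axiom is exactly the standard argument. The paper itself offers no proof of this proposition (it defers to \cite{bdn} and invites the reader to try it), so there is nothing to contrast with; your handling of the one delicate step, the compatibility of $\le$ with $\alpha$-values, is the right way to make it rigorous, and the closing remark about independence of the representation is unnecessary since the greatest element of a set is automatically unique.
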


A very important example of hyperfinite set which we will use in this paper
is the \textit{hyperfinite grid }${\mathbb{H}}$. The \textit{hyperfinite
grid }${\mathbb{H}}_{\alpha }$ is defined as the ${\alpha }$-value of the
set
\begin{equation*}
{\mathbb{H}}_{n}=\left\{ \frac{k}{n}:k\in \mathbb{Z},\ -\frac{n^{2}}{2}\leq
k<\frac{n^{2}}{2}\right\} ;
\end{equation*}%
namely,
\begin{equation*}
{\mathbb{H}}_{\alpha }:=\left\{ \frac k \alpha :k\in \mathbb{Z}^{\ast },\ -%
\frac{\alpha ^{2}}{2}\leq k<\frac{\alpha ^{2}}{2}\right\}
\end{equation*}%
In the follwing, for short, usually we will write ${\mathbb{H}}$ instead of $%
{\mathbb{H}}_{\alpha }$. Clearly ${\mathbb{H}}$ is an hyperfinite set with $%
\left\vert {\mathbb{H}}\right\vert =\alpha ^{2}$. Given $a,b\in {\mathbb{H}}%
, $ we set%
\begin{eqnarray*}
\lbrack a,b]_{\mathbb{H}} &\mathbf{=}&\left\{ x\in \mathbb{H}:a\leq k\leq
b\right\} \\
\lbrack a,b)_{\mathbb{H}} &\mathbf{=}&\left\{ x\in \mathbb{H}:a\leq
k<b\right\}
\end{eqnarray*}%
\bigskip

If we identify the functions with their graphs, $f^{\ast }$ is defined by
 definition \ref{one} and it is not difficult to prove the following

\begin{proposition}
\textit{Let }$f:A\rightarrow B$\textit{\ be a function. Then its
star-transform }$f^{\ast }$\textit{\ is a function }$f^{\ast }:A^{\ast
}\rightarrow B^{\ast }$ \textit{and, for every sequence }$\varphi :{\mathbb{N%
}}\rightarrow A$,%
\begin{equation*}
f^{\ast }(\varphi (\mathbf{\alpha }))=(f\circ \varphi )^{\ast }(\mathbf{%
\alpha })
\end{equation*}%
\noindent \textit{Moreover, }$f^{\ast }$\textit{\ is 1-1 (or onto) iff }$f$
\textit{is 1-1 (or onto, respectively).}
\end{proposition}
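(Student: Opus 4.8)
The plan is to work throughout with the graph $G=\{(a,f(a)):a\in A\}\subseteq A\times B$, so that by Definition \ref{one} we have $f^{\ast}=G^{\ast}=\{\sigma(\mathbf{\alpha}):\sigma:\mathbb{N}\to G\}$. A preliminary observation is that $\ast$-transform respects ordered pairs: writing $(a,b)=\{\{a\},\{a,b\}\}$ and applying the Pair Axiom three times, one sees that if $\sigma(n)=(\varphi(n),\psi(n))$ for all $n$ then $\sigma(\mathbf{\alpha})=(\varphi(\mathbf{\alpha}),\psi(\mathbf{\alpha}))$. Consequently every $\sigma:\mathbb{N}\to G$ has the form $\sigma(n)=(\varphi(n),f(\varphi(n)))$ with $\varphi:=\pi_{1}\circ\sigma:\mathbb{N}\to A$, and then $\sigma(\mathbf{\alpha})=(\varphi(\mathbf{\alpha}),(f\circ\varphi)(\mathbf{\alpha}))$; in particular $f^{\ast}\subseteq A^{\ast}\times B^{\ast}$.

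Next I would show that $f^{\ast}$ is single-valued with domain exactly $A^{\ast}$, which simultaneously gives the composition formula. For totality: given $x\in A^{\ast}$, pick $\varphi:\mathbb{N}\to A$ with $\varphi(\mathbf{\alpha})=x$, put $\sigma(n)=(\varphi(n),f(\varphi(n)))\in G$, and note $\sigma(\mathbf{\alpha})=(x,(f\circ\varphi)(\mathbf{\alpha}))\in f^{\ast}$. For single-valuedness: if $(x,y_{1}),(x,y_{2})\in f^{\ast}$, write them in the above form via $\varphi_{1},\varphi_{2}$; comparing first coordinates gives $\varphi_{1}(\mathbf{\alpha})=x=\varphi_{2}(\mathbf{\alpha})$, and then the Composition Axiom applied to $f$ yields $(f\circ\varphi_{1})(\mathbf{\alpha})=(f\circ\varphi_{2})(\mathbf{\alpha})$, i.e. $y_{1}=y_{2}$. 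Hence $f^{\ast}:A^{\ast}\to B^{\ast}$ is a well-defined function and $f^{\ast}(\varphi(\mathbf{\alpha}))=(f\circ\varphi)(\mathbf{\alpha})$ for every $\varphi:\mathbb{N}\to A$.

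For the equivalences I would exploit the embedding $j_{A}:a\mapsto c_{a}(\mathbf{\alpha})$ of $A$ into $A^{\ast}$ (injective by the non-triviality clause of the Extension Axiom), together with the identity $f^{\ast}\circ j_{A}=j_{B}\circ f$, which holds because $f\circ c_{a}=c_{f(a)}$. If $f$ is one-to-one and $A\ne\emptyset$, choose a left inverse $g:B\to A$ with $g\circ f=\mathrm{id}_{A}$; from $(f\circ\varphi_{1})(\mathbf{\alpha})=(f\circ\varphi_{2})(\mathbf{\alpha})$ the Composition Axiom applied to $g$ gives $\varphi_{1}(\mathbf{\alpha})=\varphi_{2}(\mathbf{\alpha})$, so $f^{\ast}$ is one-to-one (the case $A=\emptyset$ being trivial). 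Conversely, if $f^{\ast}$ is one-to-one then so is $f^{\ast}\circ j_{A}=j_{B}\circ f$, hence so is $f$. If $f$ is onto, then given $y=\eta(\mathbf{\alpha})\in B^{\ast}$ one selects (axiom of choice) $\varphi:\mathbb{N}\to A$ with $f\circ\varphi=\eta$, whence $f^{\ast}(\varphi(\mathbf{\alpha}))=y$. Conversely, if $f^{\ast}$ is onto, fix $b\in B$ and $\varphi$ with $f^{\ast}(\varphi(\mathbf{\alpha}))=c_{b}(\mathbf{\alpha})$; if $b$ were not in the range of $f$, the sequences $f\circ\varphi$ and $c_{b}$ would differ at every point, contradicting $(f\circ\varphi)(\mathbf{\alpha})=c_{b}(\mathbf{\alpha})$ by the Extension Axiom, so $b$ lies in the range of $f$.

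The only genuinely delicate point is the pair bookkeeping in the first step: the content of ``$f^{\ast}$ is a function'' is not an assumption but the assertion that the $\ast$-transform of a graph is again a total, single-valued graph on $A^{\ast}$, and this is exactly what the Pair and Composition Axioms deliver. One should also keep in mind that, unlike for $\mathbb{R}$ (Real Number Axiom), the constant sequence $c_{a}$ need not return $a$ at $\mathbf{\alpha}$; only injectivity of $j_{A}$ is used, and that follows from the Extension Axiom alone.
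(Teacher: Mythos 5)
Your proof is correct and complete. Note that the paper itself gives no proof of this proposition: it explicitly defers to the reference \cite{bdn} (``we refer to \cite{bdn} for the proves of the propositions below'') and merely remarks that, after identifying functions with their graphs, $f^{\ast}$ is defined by Definition \ref{one}. So there is nothing in the paper to compare against; judged on its own, your argument does exactly what is needed and from the right ingredients: the three-fold application of the Pair Axiom to Kuratowski pairs shows $\sigma(\mathbf{\alpha})=(\varphi(\mathbf{\alpha}),(f\circ\varphi)(\mathbf{\alpha}))$ for sequences into the graph, which gives totality, and the Composition Axiom gives single-valuedness together with the formula $f^{\ast}(\varphi(\mathbf{\alpha}))=(f\circ\varphi)(\mathbf{\alpha})$; the 1-1 and onto equivalences are handled cleanly via a left inverse plus the Composition Axiom, the embeddings $j_A,j_B$ (whose injectivity you correctly derive from the difference-preserving clause of the Extension Axiom rather than from the Real Number Axiom, which applies only to $\mathbb{R}$), choice for the forward onto direction, and the Extension Axiom again for the converse. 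The only mild caveat is the unstated use of the set-theoretic fact that equality of Kuratowski pairs forces equality of coordinates when you extract $\varphi_{1}(\mathbf{\alpha})=x=\varphi_{2}(\mathbf{\alpha})$, but since the paper assumes all usual principles of mathematics this is unobjectionable.
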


When confusion is unlikely, we will omit the symbol "$*$" and "$f^{\ast }$"
will be denoted by "$f$".

Let $f_{n}:A\rightarrow B$ be a sequence of functions; then identifying the
functions with their graphs $f_{\alpha }$ is well defined by axiom $\left(
\alpha 3\right) $ and we have that%
\begin{equation*}
f_{\alpha }:A^{\ast }\rightarrow B^{\ast }
\end{equation*}%
is a function defined by%
\begin{equation*}
f_{\alpha }(\varphi (\alpha ))=\psi \left( \alpha \right)
\end{equation*}%
where $\psi \left( n\right) :=f_{n}(\varphi (n))$ is a sequence in $B.$

\begin{definition}
A function
\begin{equation*}
f:A^{\ast }\rightarrow B^{\ast }
\end{equation*}%
is called internal if it is the graph of an internal set, namely if there is
a sequence of functions $f_{n}:A\rightarrow B$ such that%
\begin{equation*}
f=f_{\alpha }
\end{equation*}
\end{definition}

\subsection{Infinitesimally small and infinitely large numbers.}

A fundamental feature of $\alpha $-calculus is that the intuitive notions of
\textquotedblleft\ infinitesimally small\textquotedblright\ number and
\textquotedblleft\ infinitely large\textquotedblright\ number can be
formalized as actual objects of the hyperreal line. This give many
possibilities to simplify proofs and statements in calculus theory.

\begin{definition}
A hyperreal number $\xi \in {\mathbb{R}}^{\ast }$ is \textit{bounded} or
\textit{finite} if its absolute value $\left| \xi \right| <r $ for some $%
r\in {\mathbb{R}}$. We say that $\xi $ is \textit{unbounded} or \textit{%
infinite} if it is not bounded. $\xi $ is \textit{infinitesimal} if $|\xi
|<r $ for all positive $r\in {\mathbb{R}}$.
\end{definition}

Clearly, the inverse of an infinite number is infinitesimal and vice versa,
i.e. the inverse of a (nonzero) infinitesimal number is infinite. An example
of an infinitesimal is given by $\bigcirc\hspace{-6.5pt}\varepsilon\
:=1/\alpha $, the $\alpha $ -value of the sequence $\left\{ 1/n\right\} $.

From now on, the symbol $\bigcirc\hspace{-6.5pt}\varepsilon\ $ will always
denote $1/\alpha .$

All infinitesimal and all real numbers are bounded. However there are finite
hyperreals that are neither infinitesimal nor real, for example $5+\bigcirc
\hspace{-6.5pt}\varepsilon \ $ and $7+\sin \mathbf{\alpha }$.

\begin{definition}
We say that two hyperreal numbers $\xi $ and $\eta $ are \textit{%
infinitesimally close} if $\xi -\eta $ is infinitesimal. In this case we
write $\xi \sim \eta $.
\end{definition}

It is easily seen that $\sim $ is an equivalence relation.

On the other hand (as it is intuitive) each bounded hyperreal is infinitely
close to some real. The following indeed comes from the completeness of the
real line.

\begin{theorem}[Shadow Theorem]
\textit{Every bounded hyperreal number }$\xi $\textit{\ is infinitesimally
close to a unique real number }$r$\textit{, called the shadow of }$\xi $%
\textit{. Symbolically }$r=sh(\xi )$\textit{. }
\end{theorem}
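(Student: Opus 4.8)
The plan is to prove existence of the shadow first, then uniqueness. For existence, let $\xi$ be a bounded hyperreal, so $|\xi| < M$ for some real $M$. First I would introduce the cut
\begin{equation*}
L = \{ q \in \mathbb{R} : q < \xi \},
\end{equation*}
where the inequality $q < \xi$ makes sense because $\mathbb{R} \subset \mathbb{R}^\ast$ and $\mathbb{R}^\ast$ is an ordered field (by the Proposition stating this). The set $L$ is nonempty (it contains $-M$) and bounded above in $\mathbb{R}$ (by $M$), so by the completeness of $\mathbb{R}$ it has a least upper bound $r = \sup L \in \mathbb{R}$. The claim is that $\xi \sim r$, i.e.\ $|\xi - r|$ is infinitesimal. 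To see this, fix an arbitrary positive real $\eps$. Since $r - \eps < r$ and $r$ is the \emph{least} upper bound of $L$, the number $r - \eps$ is not an upper bound, so there is $q \in L$ with $q > r - \eps$; then $r - \eps < q < \xi$. On the other side, $r + \eps$ is an upper bound notion one must argue: if $r + \eps \le \xi$ then $r + \eps/2 \in L$, contradicting that $r = \sup L$; hence $\xi < r + \eps$. Combining, $-\eps < \xi - r < \eps$ for every positive real $\eps$, which is exactly the statement that $\xi - r$ is infinitesimal, i.e.\ $\xi \sim r$.

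For uniqueness, suppose $\xi \sim r$ and $\xi \sim r'$ with $r, r' \in \mathbb{R}$. Since $\sim$ is an equivalence relation (as remarked after its definition), $r \sim r'$, so $r - r'$ is an infinitesimal \emph{real} number. But the only infinitesimal real number is $0$: if $r - r' \ne 0$ then $|r - r'|$ is a positive real, and taking this very number as the threshold in the definition of infinitesimal gives $|r - r'| < |r - r'|$, a contradiction. Hence $r = r'$, and the shadow is well defined; we write $r = sh(\xi)$.

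The only subtle point — and the one I would be most careful about — is the handling of strict versus non-strict inequalities when passing between the cut $L$ and $\xi$ in the existence argument, since $\xi$ itself need not be real and the reals sit inside $\mathbb{R}^\ast$ only as a (non-cofinal, non-coinitial on any infinitesimal scale) subset. The ordered-field axioms for $\mathbb{R}^\ast$ are exactly what licenses the manipulations $r - \eps < q < \xi \Rightarrow r - \eps < \xi$ and the trichotomy used to conclude $\xi < r + \eps$; no transfer principle or nonstandard machinery beyond the ordered-field structure and the completeness of $\mathbb{R}$ is needed, which is why the theorem is flagged as coming "from the completeness of the real line."
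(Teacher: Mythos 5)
Your proof is correct: the supremum-of-the-cut argument for existence and the ``only real infinitesimal is $0$'' argument for uniqueness are both sound, and the small points you flag (comparing reals with hyperreals inside the ordered field $\mathbb{R}^{\ast}$) are handled properly. The paper itself gives no proof of the Shadow Theorem --- it defers to the reference and only remarks that the result ``comes from the completeness of the real line'' --- and your argument is precisely that standard completeness argument, so it matches the intended route.
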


The notion of a shadow is extended to every hyperreal number, by setting $%
sh(\xi )=+\infty $ if $\xi $ is positive unbounded, and $sh(\xi )=-\infty $
if $\xi $ is negative unbounded.

\begin{definition}
Given two hyperreal numbers $\xi $ and $\zeta \in {\mathbb{R}}^{\ast
}\backslash \left\{ 0\right\} ,$ we say that they have the same order if $%
\xi /\zeta $ and $\zeta /\xi $ are bounded numbers and we will write
\begin{equation*}
\xi \approx \zeta
\end{equation*}%
(notice the difference between $"\sim "$ and $"\approx "$ since these
symbols will be largely used in the rest of this paper). We say that $\xi $
has a larger order than $\zeta $ if $\xi /\zeta $ is an infinite number and
we will write%
\begin{equation*}
\xi \gg \zeta
\end{equation*}
We say that $\xi $ has a smaller order than $\zeta $ if $\xi /\zeta $ is an
infinitesimal number and we will write%
\begin{equation*}
\xi \ll \zeta
\end{equation*}
\end{definition}

\subsection{Some notions of infinitesimal calculus}

Now we see how all this machinery can be used to build a rigorous
``infinitesimal'' calculus. We present how the definition of limit can be
given in our setting.

\begin{definition}
We say that $\lim_{x\rightarrow x_{0}}f(x)=l$ if $f^{\ast }(\xi )\sim l$ for
all $\xi \sim x_{0}$ ($\xi \neq x_{0}$).
\end{definition}

With the definition of limit all the elementary calculus can be
reconstructed, but the features of our method allow to avoid the use of
limits and work with \emph{real} infinitesimal and infinite numbers. Let us
see some example: the definition of continuity and derivative. We remark
that the theory given by these definitions is equivalent to the standard
calculus and all the known results (as for example the Lagrange's or
Fermat's theorems) applies.

\begin{definition}
\ A real function $f:A\rightarrow {\mathbb{R}}$ is \textit{continuous} at $%
x_{0}\in A$ if for every $\xi \in A^{\ast }$, $\xi \sim x_{0}\Rightarrow
f^*(\xi )\sim f^*(x_{0})$.
\end{definition}

Let $f$ be any real function defined on a neighborhood of $x_{0}$.

\begin{definition}
We say that $f$ has derivative at $x_{0}$ if there exists $f^{\prime
}(x_{0})\in {\mathbb{R}}$ such that for all infinitesimals $\varepsilon \neq
0$,

\begin{equation*}
\frac{f^{\ast }(x_{0}+\varepsilon )-f^{\ast }(x_{0})}{\varepsilon }\sim
f^{\prime }(x_{0})
\end{equation*}

Equivalently, $f$ has \textit{derivative} $f^{\prime }(x_{0})$ at $x_{0}$ if
for every infinitesimal $\varepsilon $ there is an infinitesimal $\delta $
such that $f^{\ast }(x_{0}+\varepsilon )=f(x_{0})+f^{\prime
}(x_{0})\varepsilon +\delta \varepsilon $.
\end{definition}

As said before all the classical results of calculus hold in this
framework. An example which will be used in the following is the Taylor
formula (with infinitesimal remainder).

\begin{theorem}
\label{taylor} If $f\in C^{n+1}(\mathbb{R})$ then for each infinitesimal $%
\epsilon $ there is an infinitesimal $\eta $ such that
\begin{equation*}
f^{\ast }(x+\epsilon )=\sum_{k\leq n}\frac{f^{(k)}(x)\epsilon ^{k}}{k!}+\eta
\epsilon ^{n}.
\end{equation*}
\end{theorem}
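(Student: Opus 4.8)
The plan is to transfer the classical Taylor theorem with Lagrange (or Peano) remainder through the $\ast$-transform and then estimate the remainder term. First I would invoke the standard Taylor formula for $f \in C^{n+1}(\mathbb{R})$: for every real $x$ and every real $h \neq 0$ there is a point $c$ strictly between $x$ and $x+h$ with
\begin{equation*}
f(x+h)=\sum_{k\leq n}\frac{f^{(k)}(x)h^{k}}{k!}+\frac{f^{(n+1)}(c)}{(n+1)!}h^{n+1}.
\end{equation*}
Equivalently, defining the remainder function $R(x,h)=f(x+h)-\sum_{k\leq n}\frac{f^{(k)}(x)h^{k}}{k!}$, the classical statement gives $|R(x,h)|\leq \frac{M(x,h)}{(n+1)!}|h|^{n+1}$ where $M(x,h)=\sup\{|f^{(n+1)}(c)|:c\in[x-|h|,x+|h|]\}$, a real-valued function of the two real variables $x,h$. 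Since this inequality holds for all real $x,h$, it is a first-order statement that transfers: for all hyperreal $\xi,\eta$ one has $|R^{\ast}(\xi,\eta)|\leq \frac{M^{\ast}(\xi,\eta)}{(n+1)!}|\eta|^{n+1}$.

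Next I would specialize to $\xi=x$ (a fixed real) and $\eta=\epsilon$ an infinitesimal. The key point is that $M^{\ast}(x,\epsilon)$ is a \emph{bounded} hyperreal: because $f^{(n+1)}$ is continuous at $x$, for $\epsilon$ infinitesimal every $c$ in the hyperreal interval $[x-|\epsilon|,x+|\epsilon|]$ satisfies $f^{(n+1)\ast}(c)\sim f^{(n+1)}(x)$, so $M^{\ast}(x,\epsilon)\sim |f^{(n+1)}(x)|$ and in particular $M^{\ast}(x,\epsilon)$ is finite — bounded by, say, $|f^{(n+1)}(x)|+1$. Hence
\begin{equation*}
\left|\,f^{\ast}(x+\epsilon)-\sum_{k\leq n}\frac{f^{(k)}(x)\epsilon^{k}}{k!}\,\right| \;\leq\; \frac{|f^{(n+1)}(x)|+1}{(n+1)!}\,|\epsilon|^{n+1}.
\end{equation*}
Now set $\eta:=\dfrac{f^{\ast}(x+\epsilon)-\sum_{k\leq n}f^{(k)}(x)\epsilon^{k}/k!}{\epsilon^{n}}$ (with $\eta:=0$ if $\epsilon=0$). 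The displayed bound shows $|\eta|\leq \frac{|f^{(n+1)}(x)|+1}{(n+1)!}|\epsilon|$, which is infinitesimal since $\epsilon$ is; and by construction $f^{\ast}(x+\epsilon)=\sum_{k\leq n}\frac{f^{(k)}(x)\epsilon^{k}}{k!}+\eta\epsilon^{n}$, which is exactly the claim.

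The main obstacle — really the only subtle point — is justifying the transfer step cleanly within $\alpha$-theory, since the excerpt has deliberately avoided stating a general transfer principle and instead works axiom-by-axiom. I would handle this concretely: encode the classical remainder estimate as a sequence-level fact (for each $n\in\mathbb{N}$ the finite set / graph defining $R$ and $M$ satisfies the inequality pointwise), and then apply the Internal Set Axiom ($\alpha 4$) together with the Composition Axiom ($\alpha 2$) to push the pointwise inequality to the $\alpha$-value — this is precisely the kind of argument the excerpt says the reader should try. Alternatively, and perhaps more in the spirit of the paper's elementary tone, one can bypass $M^{\ast}$ entirely by an induction on $n$: the case $n=0$ is continuity, and the inductive step follows by applying the derivative definition to the function $g(h)=f(x+h)-\sum_{k\leq n}f^{(k)}(x)h^{k}/k!$ and its lower-order Taylor expansion, combined with the observation that $g$ vanishes to order $n$ at $0$. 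I expect the write-up to choose whichever of these two routes matches the level of machinery the authors have already built; everything else is routine estimation with infinitesimals.
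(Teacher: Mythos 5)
The paper itself gives no proof of Theorem \ref{taylor}: it is stated bare, as an instance of the blanket remark that ``all the classical results of calculus hold in this framework,'' with the reader referred to the general equivalence with standard calculus (and to \cite{bdn}) rather than to an argument on the page. So there is nothing to match your proof against; judged on its own, your argument is correct and is exactly the kind of routine verification the authors are implicitly leaning on. Transferring the pointwise real inequality $|R(x,h)|\leq \frac{M(x,h)}{(n+1)!}|h|^{n+1}$ to the hyperreals is legitimate in $\alpha$-theory: writing $\epsilon=\psi(\mathbf{\alpha})$, the sequence $n\mapsto \frac{M(x,\psi(n))}{(n+1)!}|\psi(n)|^{n+1}-|R(x,\psi(n))|$ is nonnegative, and nonnegativity is preserved at the $\alpha$-value (e.g.\ by composing with the absolute value and using $\mathbf{\alpha 2}$), which is the sequence-level argument you sketch. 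Two small polish points. First, your justification that $M^{\ast}(x,\epsilon)$ is finite is phrased through an external supremum over the hyperreal interval; cleaner is to note that $M(x,h)\leq K:=\max_{[x-1,x+1]}|f^{(n+1)}|$ for all real $|h|\leq 1$, a real inequality that transfers directly and gives $M^{\ast}(x,\epsilon)\leq K$ --- boundedness is all you need, so the asymptotic statement $M^{\ast}(x,\epsilon)\sim|f^{(n+1)}(x)|$ can be dropped. Second, your handling of $\epsilon=0$ by setting $\eta=0$ is exactly right and worth keeping, since $0$ is infinitesimal in this framework. Your alternative inductive route (Peano-style, needing only $C^{n}$) would also work and is arguably closer in spirit to the paper's definition of derivative, but the transfer-of-Lagrange-remainder proof you give is complete as it stands.
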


Now we introduce a concept of integral. This concept is more general than
the Riemann integral and will allow us to integrate noises and stochastic
equations. Intuitively this integral is just an infinite sum of hyperreal
numbers. This sum will be done on an hyperfinite set.

\begin{definition}
If $\Gamma =\chi (\mathbf{\alpha })$ is a hyperfinite set of hyperreal
numbers, then its \textit{hyperfinite sum: }%
\begin{equation*}
\sum_{x\in \Gamma }x=Sum_{\chi }(\mathbf{\alpha })
\end{equation*}
is defined as the value at infinity of the sequence of finite sums
\begin{equation*}
Sum_{\chi }(n)=\sum_{x\in \chi (n)}x.
\end{equation*}
\end{definition}

It is easily checked that this definition does not depend on the choice of
the sequence $\left\{ \chi (n)\right\} $, but only on its value at infinity $%
\Gamma $. Using this definition, we define the $\alpha $-$integral$.

\begin{definition}
\label{17}Let $f:A\rightarrow {\mathbb{R}}$ be \textit{any} function, where $%
A \subseteq \mathbb{R}$. Its \textit{Alpha-integral} on $A$, denoted by $%
\int_{A}f(x)\,\Delta x$, is the number in $\mathbb{R} \cup \left\{ {\mathbb{%
\pm }\infty }\right\} $ defined as the shadow of the following hyperfinite
sum:

\begin{equation*}
\int_{A}f(x)\,\,\Delta x\ =\ sh\left( \bigcirc\hspace{-6.5pt}\varepsilon\
\cdot \sum_{\xi \in {\mathbb{H}\cap }A^{\ast }}f^{\ast }(\xi )\right) \
\end{equation*}
\end{definition}

Notice that

\begin{equation*}
\int_{A}f(x)\,\,\Delta x\ =\ sh\left( S_{A}(\alpha \mathbf{)}\right) \ \text{%
where}\ S_{A}(n\mathbf{)=}\frac{1}{n}\cdot \sum_{x\in {\mathbb{H}}(n){%
\mathbb{\cap }}A}f^{\ast }(x)\
\end{equation*}

Of course, if $A=\left[ a,b\right] $ is a closed interval, we adopt the
usual notation $\int_{a}^{b}f(x)\,\,\Delta x$.

The Alpha-integral $\int_{a}^{b}f(x)\,\,\Delta x$ is defined for every
function. In fact, while the sequence
\begin{equation*}
S_{a}^{b}(n)=\frac{1}{n}\cdot \sum_{x\in {\mathbb{H}}(n){\mathbb{\cap }}%
(a,b)}f^{\ast }(x)
\end{equation*}
may not have a limit in the classic sense, its $\alpha $-value\ $%
S_{a}^{b}(\alpha )$ is always defined. If the function $f$ is Riemann
integrable then $\lim_{n\rightarrow \infty }S_{a}^{b}(n)$ exists and
coincides with the $\alpha $-integral (notice that if a\ real sequence $%
\left\{ \varphi (n)\right\} $ has \textquotedblleft
classic\textquotedblright\ limit $l\in {\mathbb{R}\cup }\left\{ \pm \infty
\right\} $, then it must be $sh\left( \varphi (\alpha )\right) =l$). Thus
the Alpha-integral actually generalizes Riemann integral.

\section{Grid functions}

A grid function is a function whose argument range on an hyperfinite "grid"
whose elements are the (hypernatural) multiples of $\frac{1}{\alpha }.$
Since the grid is hyperfinite these functions are easy to handle and from
many points of view they behave similarly to functions on finite sets. We
will see that this simple kind of functions are flexible enough to contain
elements representing distributions. This flexibility will allow us to
obtain in a simple way a kind of stochastic calculus ( see, e.g. the Ito's
formula, Thm. \ref{ITO}).

\subsection{Basic notions}

An internal function%
\begin{equation*}
\xi :{\mathbb{H}}\rightarrow {\mathbb{R}}^{\ast }
\end{equation*}%
is called \textit{grid function}.

\begin{definition}
\label{dergriglia}Given a grid function $\xi :{\mathbb{H}}\rightarrow {%
\mathbb{R}}^{\ast }$, we define its grid derivative $\frac{\Delta \xi }{%
\Delta t}$ as
\begin{equation*}
\frac{\Delta \xi }{\Delta t}(t)=\frac{\xi (t+\bigcirc\hspace{-6.5pt}%
\varepsilon\ )-\xi (t)}{\bigcirc\hspace{-6.5pt}\varepsilon\ };
\end{equation*}%
The grid integral of $\xi $ is defined as
\begin{equation*}
\mathbb{I}\left[ \xi \right] =\bigcirc\hspace{-6.5pt}\varepsilon\ \sum_{t\in
\mathbb{H}}\xi \left( t\right) ;
\end{equation*}%
if $\Gamma \subset \mathbb{H}$ is a hyperfinite set\ we define $\mathbb{I}_{{%
\Gamma }}\left[ \xi \right] $, its grid integral in ${\Gamma }$, as
\begin{equation*}
\mathbb{I}_{{\Gamma }}\left[ \xi \right] =\bigcirc\hspace{-6.5pt}%
\varepsilon\ \sum_{t\in {\Gamma }}\xi \left( t\right)
\end{equation*}
\end{definition}

Most of the properties of the usual derivative hold also for the grid
derivative, for example we have that, if $\xi $ and $\zeta $ are continuous
functions, with finite grid derivative,%
\begin{eqnarray*}
\frac{\Delta (\xi \zeta )}{\Delta t} &=&\frac{\xi (t+\bigcirc\hspace{-6.5pt}%
\varepsilon\ )\zeta (t+\bigcirc\hspace{-6.5pt}\varepsilon\ )-\xi (t)\zeta (t)%
}{\bigcirc\hspace{-6.5pt}\varepsilon\ }= \\
&=&\frac{\xi (t+\bigcirc\hspace{-6.5pt}\varepsilon\ )\zeta (t+\bigcirc%
\hspace{-6.5pt}\varepsilon\ )-\xi (t+\bigcirc\hspace{-6.5pt}\varepsilon\
)\zeta (t)+\xi (t+\bigcirc\hspace{-6.5pt}\varepsilon\ )\zeta (t)-\xi
(t)\zeta (t)}{\bigcirc\hspace{-6.5pt}\varepsilon\ }= \\
&=&\frac{\Delta \xi }{\Delta t}(t)\cdot \zeta (t)+\xi (t+\bigcirc\hspace{%
-6.5pt}\varepsilon\ )\cdot \frac{\Delta \zeta }{\Delta t}(t)\sim \frac{%
\Delta \xi }{\Delta t}\cdot \zeta +\xi \cdot \frac{\Delta \zeta }{\Delta t}.
\end{eqnarray*}

These notions can be easily extended to functions of more variables; for
example if
\begin{equation*}
\rho (t,x):{\mathbb{H}}\times {\mathbb{H}}\rightarrow {\mathbb{R}}^{\ast }
\end{equation*}%
we set%
\begin{eqnarray*}
\frac{\Delta \rho }{\Delta t}(t,x) &=&\frac{\rho (t+\bigcirc\hspace{-6.5pt}%
\varepsilon\ ,x)-\rho (t,x)}{\bigcirc\hspace{-6.5pt}\varepsilon\ } \\
\frac{\Delta \rho }{\Delta x}(t,x) &=&\frac{\rho (t,x+\bigcirc\hspace{-6.5pt}%
\varepsilon\ )-\rho (t,x)}{\bigcirc\hspace{-6.5pt}\varepsilon\ }
\end{eqnarray*}%
and if ${\Gamma }\subset \mathbb{H}^{2}$ is a hyperfinite set\ we define its
grid integral $\mathbb{I}_{{\Gamma }}\left[ \rho \right] $ as%
\begin{equation*}
\mathbb{I}_{{\Gamma }}\left[ \rho \right] =\bigcirc\hspace{-6.5pt}%
\varepsilon\ ^{2}\sum_{(t,x)\in {\Gamma }}\rho \left( t,x\right) .
\end{equation*}%
It is clear that the derivative of a grid function $\xi $ is a grid
function. Moreover, if $\xi $ is a grid function, then the \textit{\textit{%
grid integral function }}$x\mapsto \mathbb{I}_{\left[ a,x\right) }\left[ \xi %
\right] $ is a grid function. We have the following relation between the
grid-derivative and the grid-integral:

\begin{theorem}
\label{teofondcalc}If $\xi $ is a grid function, then
\begin{eqnarray*}
\mathbb{I}_{\left[ x,y\right) }\left[ \frac{\Delta \xi }{\Delta x}\right]
&=&\xi \left( y\right) -\xi \left( x\right) \\
\frac{\Delta }{\Delta x}\mathbb{I}_{[a,x)}\left[ \xi \right] &=&\xi \left(
x\right)
\end{eqnarray*}

\begin{proof}
Obviously we have%
\begin{eqnarray*}
\mathbb{I}_{\left[ x,y\right) }\left[ \frac{\Delta \xi }{\Delta x}\right]
&=&\bigcirc\hspace{-6.5pt}\varepsilon\ \sum \frac{\xi (x+\bigcirc\hspace{%
-6.5pt}\varepsilon\ )-\xi (x)}{\bigcirc\hspace{-6.5pt}\varepsilon\ }+\frac{%
\xi (x+2\cdot \bigcirc\hspace{-6.5pt}\varepsilon\ )-\xi (x+\bigcirc\hspace{%
-6.5pt}\varepsilon\ )}{\bigcirc\hspace{-6.5pt}\varepsilon\ }\dots +\frac{\xi
(y)-\xi (y-\bigcirc\hspace{-6.5pt}\varepsilon\ )}{\bigcirc\hspace{-6.5pt}%
\varepsilon\ }= \\
&=&\xi (y)-\xi (x).
\end{eqnarray*}%
Furthermore%
\begin{eqnarray*}
\frac{\Delta }{\Delta x}\mathbb{I}_{[a,x)}\left[ \xi \right] &=&\frac{%
\mathbb{I}_{[a,x+\bigcirc\hspace{-6.5pt}\varepsilon\ )}\left[ \xi \right] -%
\mathbb{I}_{[a,x+\bigcirc\hspace{-6.5pt}\varepsilon\ )}\left[ \xi \right] }{%
\bigcirc\hspace{-6.5pt}\varepsilon\ }= \\
&=&\sum_{t\in \lbrack a,x+\bigcirc\hspace{-6.5pt}\varepsilon\ )}\xi
(t)-\sum_{t\in \lbrack a,x)}\xi (t)=\xi (x).
\end{eqnarray*}
\end{proof}
\end{theorem}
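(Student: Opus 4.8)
The plan is to reduce both identities to elementary telescoping and range-splitting facts for \emph{finite} sums and then transport them to the hyperfinite grid via the axioms of $\alpha$-theory. Throughout I will write $1/\alpha$ for the basic infinitesimal and use that $\xi$, being an internal grid function, is $\xi=\chi(\alpha)$ for some sequence of functions $\chi(n)\colon\mathbb{H}_n\to\mathbb{R}$, where $\mathbb{H}_n=\{k/n:k\in\mathbb{Z},\ -n^2/2\le k<n^2/2\}$ and $\mathbb{H}=\mathbb{H}_\alpha$.

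For the first identity I would unravel the definitions of the grid derivative and the grid integral:
\[
\mathbb{I}_{[x,y)}\Bigl[\tfrac{\Delta\xi}{\Delta x}\Bigr]
=\frac1\alpha\sum_{t\in[x,y)_{\mathbb{H}}}\frac{\xi(t+\tfrac1\alpha)-\xi(t)}{1/\alpha}
=\sum_{t\in[x,y)_{\mathbb{H}}}\bigl(\xi(t+\tfrac1\alpha)-\xi(t)\bigr).
\]
Choosing representing sequences $x=u(\alpha)$, $y=v(\alpha)$ with $u(n),v(n)\in\mathbb{H}_n$ and $u(n)\le v(n)$ for all $n$, the Internal Set Axiom identifies $[x,y)_{\mathbb{H}}$ and the hyperfinite sum above with the $\alpha$-values of $[u(n),v(n))\cap\mathbb{H}_n$ and of the corresponding finite sums. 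For each fixed $n$ that finite sum telescopes to $\chi(n)(v(n))-\chi(n)(u(n))$; being a pointwise equality of two sequences of reals, it is preserved at $\alpha$ by the Extension Axiom, so the right-hand side above equals $\xi(y)-\xi(x)$.

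For the second identity I would put $G(x):=\mathbb{I}_{[a,x)}[\xi]$ (with $x\ge a$) and compute its grid derivative directly:
\[
\frac{\Delta G}{\Delta x}(x)=\frac{G(x+\tfrac1\alpha)-G(x)}{1/\alpha}
=\alpha\Bigl(\tfrac1\alpha\sum_{t\in[a,x+\frac1\alpha)_{\mathbb{H}}}\xi(t)-\tfrac1\alpha\sum_{t\in[a,x)_{\mathbb{H}}}\xi(t)\Bigr)
=\sum_{t\in[a,x+\frac1\alpha)_{\mathbb{H}}}\xi(t)-\sum_{t\in[a,x)_{\mathbb{H}}}\xi(t).
\]
The combinatorial heart of the matter is the range-splitting identity $[a,x+\tfrac1\alpha)_{\mathbb{H}}=[a,x)_{\mathbb{H}}\cup\{x\}$ with $x\notin[a,x)_{\mathbb{H}}$; at the level of representing sequences (write $a=w(\alpha)$, $x=u(\alpha)$) this is the transparent finite statement $[w(n),u(n)+\tfrac1n)\cap\mathbb{H}_n=\bigl([w(n),u(n))\cap\mathbb{H}_n\bigr)\cup\{u(n)\}$, which again transfers to $\alpha$. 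The hyperfinite sum therefore splits accordingly and the right-hand side collapses to the single extra summand $\xi(x)$.

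I expect the substance to be entirely routine; the only real care needed is bookkeeping. Every manipulation of the half-open grid intervals $[\cdot,\cdot)_{\mathbb{H}}$ and of the hyperfinite sums must first be read as the $\alpha$-value of the analogous finite operation before one telescopes or splits it, and one must note the right boundary of $\mathbb{H}$: if $x$ is the largest element of $\mathbb{H}$ then $x+\tfrac1\alpha\notin\mathbb{H}$ and $G(x+\tfrac1\alpha)$ is undefined, so the second formula is to be read for $x$ strictly below the top of $\mathbb{H}$ (discarding that one infinite point costs nothing). Modulo these remarks, the two assertions are nothing more than ``a sum of consecutive differences telescopes'' and ``enlarging the index set by one point adds one summand'', carried over verbatim to the hyperfinite world.
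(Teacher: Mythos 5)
Your proof is correct and follows essentially the same route as the paper's: the first identity by telescoping the hyperfinite sum of consecutive differences, the second by observing that enlarging $[a,x)_{\mathbb H}$ to $[a,x+\frac{1}{\alpha})_{\mathbb H}$ adds the single summand $\xi(x)$. The only difference is presentational — you justify the manipulations explicitly at the level of representing sequences and note the boundary point of $\mathbb H$, whereas the paper performs the same cancellations directly on the hyperfinite sums.
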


\begin{definition}
\label{int}of A grid function $\xi $ is called integrable in $\left[ a,b%
\right] $ if $\mathbb{I}_{\left[ a,b\right] }\left[ \xi \right] $ is finite;
in this case, we set
\begin{equation*}
\int_{a}^{b}\xi (s)\,\Delta s:=\ sh\left( \mathbb{I}_{[a,b)}\left[ \xi %
\right] \right) =sh\left( \bigcirc\hspace{-6.5pt}\varepsilon\ \sum_{t\in {%
\mathbb{H}\cap \lbrack }a,b)}\xi (t)\right)
\end{equation*}%
$\xi $ is called absolutely integrable in $\left[ a,b\right] $ if $\mathbb{I}%
_{[a,b)}\left[ \left\vert \xi \right\vert \right] $ is finite. $%
\int_{a}^{b}\xi (s)\,ds$ will be called $\alpha $-integral of $\xi $.
\end{definition}

Of course, this integral is strictly related to the $\alpha $-integral given
in Def. \ref{17}. In fact, to every real function
\begin{equation*}
f:\left[ a,b\right] \rightarrow \mathbb{R}
\end{equation*}%
it is possible to associate its natural extension
\begin{equation*}
f^{\mathbb{\ast }}:\left[ a,b\right] ^{\mathbb{\ast }}\rightarrow \mathbb{R}%
^{\mathbb{\ast }}
\end{equation*}%
and a grid function%
\begin{equation}
\tilde{f}:\left[ a,b\right] _{{\mathbb{H}}}\rightarrow \mathbb{R}^{\mathbb{%
\ast }}  \label{gri}
\end{equation}%
obtained as restriction of $f^{\mathbb{\ast }}$ to $\left[ a,b\right] _{{%
\mathbb{H}}}.$ When no ambiguity is possible we will denote $f^{\mathbb{\ast
}} $ and $\tilde{f}$ with the same symbol.

The $\alpha $-integral of $f$ coincides with the $\alpha $-integral of $%
\tilde{f}$ given by Def. \ref{int}.

\subsection{The Ito formula}

We show the power of the grid functions approach by stating in a very simple
way a proposition which is, in some sense, a variant of the Ito's formula.
As in the standard approach this formula will be the main tool in the study
of grid stochastic equations.

\begin{theorem}[Nonstandard Ito's Formula]
\label{ITO} Let $\varphi \in C_{0}^{3}(\mathbb{R}^{2})$ and $x(t)$ be a grid
function such that
\begin{equation}
\left\vert \frac{\Delta x}{\Delta t}(t)\right\vert \leq \eta \alpha ^{2/3},
\label{b}
\end{equation}%
where $\eta \sim 0$.

Then
\begin{equation*}
\frac{\Delta }{\Delta t}\varphi (t,x(t))\sim \varphi _{t}(t,x(t))+\varphi
_{x}(t,x(t))\frac{\Delta x}{\Delta t}(t)+\frac{\bigcirc\hspace{-6.5pt}%
\varepsilon\ }{2}\varphi _{xx}(t,x(t))\cdot \left( \frac{\Delta x}{\Delta t}%
(t)\right) ^{2}.
\end{equation*}%
Here $\varphi _{t},$ $\varphi _{x}$ and $\varphi _{xx}$ denote the usual
partial derivative of $\varphi .$
\end{theorem}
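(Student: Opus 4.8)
The plan is to Taylor–expand $\varphi$ to second order around the point $(t,x(t))$, and then to read off, using the growth bound (\ref{b}), that everything beyond the three displayed terms becomes infinitesimal after division by $\ept$. First I would set $\Delta t=\ept$ and $\Delta x=x(t+\ept)-x(t)$, so that $\Delta x=\ept\cdot\frac{\Delta x}{\Delta t}(t)$ and, by (\ref{b}), $|\Delta x|\le\eta\,\alpha^{-1/3}$; in particular $\Delta x\sim0$. Since $\varphi\in C_{0}^{3}(\mathbb R^{2})$, all its partial derivatives of order $\le3$ are bounded on $\mathbb R^{2}$ by some real $M$; hence the classical second–order Taylor estimate with Lagrange remainder (in the spirit of Theorem \ref{taylor}, and valid on $\mathbb R^{\ast}$), applied at $(t,x(t))$ with increments $(\Delta t,\Delta x)$, yields a hyperreal $R$ with $|R|\le CM\,(|\Delta t|+|\Delta x|)^{3}$ for an absolute constant $C$, such that
\[
\varphi(t+\ept,x(t+\ept))-\varphi(t,x(t))=\varphi_{t}\,\Delta t+\varphi_{x}\,\Delta x+\tfrac12\bigl(\varphi_{tt}\,\Delta t^{2}+2\varphi_{tx}\,\Delta t\,\Delta x+\varphi_{xx}\,\Delta x^{2}\bigr)+R,
\]
where $\varphi_{t},\varphi_{x},\varphi_{tt},\varphi_{tx},\varphi_{xx}$ are all evaluated at $(t,x(t))$.

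Next I would divide by $\ept=\Delta t$ and substitute $\Delta x=\ept\,\frac{\Delta x}{\Delta t}(t)$: the terms $\varphi_{x}\,\Delta x/\ept$ and $\tfrac12\varphi_{xx}\,\Delta x^{2}/\ept$ turn into exactly $\varphi_{x}(t,x(t))\,\frac{\Delta x}{\Delta t}(t)$ and $\tfrac{\ept}{2}\varphi_{xx}(t,x(t))\bigl(\frac{\Delta x}{\Delta t}(t)\bigr)^{2}$, which are the second and third terms of the claim, while $\varphi_{t}\,\Delta t/\ept=\varphi_{t}$. Thus
\[
\frac{\Delta}{\Delta t}\varphi(t,x(t))-\varphi_{t}-\varphi_{x}\frac{\Delta x}{\Delta t}(t)-\frac{\ept}{2}\varphi_{xx}\Bigl(\frac{\Delta x}{\Delta t}(t)\Bigr)^{2}=\tfrac12\varphi_{tt}\,\ept+\varphi_{tx}\,\Delta x+\frac{R}{\ept},
\]
and it remains only to see the right–hand side is infinitesimal. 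Here $\varphi_{tt}\,\ept$ is bounded (by $M$) times $\ept\sim0$, hence infinitesimal; $\varphi_{tx}\,\Delta x$ is bounded times $\Delta x\sim0$, hence infinitesimal; and from $|\Delta t|=\alpha^{-1}$ and $|\Delta x|\le\eta\,\alpha^{-1/3}$ one gets $(|\Delta t|+|\Delta x|)^{3}\le 8(\alpha^{-3}+\eta^{3}\alpha^{-1})$, so $|R|/\ept\le 8CM(\alpha^{-2}+\eta^{3})\sim0$ because $\alpha^{-2}\sim0$ and $\eta\sim0$. This yields the asserted relation.

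The one point that is not bookkeeping — and the place I expect to have to be careful — is the Taylor estimate at the hyperreal base point $(t,x(t))$, since Theorem \ref{taylor} is stated one–dimensionally with a real base point. The clean route, used above, is that since $\varphi$ has globally bounded third derivatives the statement ``$|\varphi(t+a,x+b)-P_{2}(a,b)|\le CM(|a|+|b|)^{3}$ for all real $t,x,a,b$'' (with $P_{2}$ the second Taylor polynomial) is an elementary assertion, so its $\ast$–transform holds for all hyperreal arguments. Alternatively one can iterate Theorem \ref{taylor} in the two variables, first in $t$ and then in $x$; but then one must verify, via the internality machinery, that the remainder of the $t$–expansion is uniform in the hyperreal parameter $\Delta x$ and that $\varphi_{t}(t,x(t)+\Delta x)\sim\varphi_{t}(t,x(t))$, both true but slightly more delicate than the transfer argument. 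Everything else in the proof is arithmetic of infinitesimals driven by the single inequality (\ref{b}).
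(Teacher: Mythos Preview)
Your argument is correct and essentially equivalent to the paper's, but the decomposition is organized differently. The paper does not do a single two--variable Taylor expansion: it first adds and subtracts $\varphi(t,x(t+\ept))$, so that the $t$--increment is handled directly by the definition of $\varphi_{t}$ (followed by the continuity step $\varphi_{t}(t,x(t+\ept))\sim\varphi_{t}(t,x(t))$, which uses $\Delta x\sim0$), and then applies the one--variable Taylor formula (Theorem~\ref{taylor}) in the $x$--variable alone, expanded to third order. In that route the cross terms $\varphi_{tt}\,\ept$ and $\varphi_{tx}\,\Delta x$ never appear, whereas you carry them explicitly and kill them afterwards; conversely, you avoid the intermediate $\varphi_{t}(t,x(t+\ept))\sim\varphi_{t}(t,x(t))$ step. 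Both approaches use the bound~(\ref{b}) in the same place, to make the cubic remainder divided by $\ept$ infinitesimal. Your remark about the hyperreal base point is apt: the paper applies Theorem~\ref{taylor} at the hyperreal point $x(t)$ without comment, implicitly relying on exactly the global $C_{0}^{3}$ bound and transfer that you spell out.
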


\begin{proof}
By definition of grid derivative we have that%
\begin{eqnarray*}
\frac{\Delta }{\Delta t}\varphi (t,x(t)) &=&\frac{\varphi (t+\bigcirc\hspace{%
-6.5pt}\varepsilon\ ,x(t+\bigcirc\hspace{-6.5pt}\varepsilon\ ))-\varphi
(t,x(t+{\bigcirc\hspace{-6.5pt}\varepsilon\ }))}{{\bigcirc\hspace{-6.5pt}%
\varepsilon\ }}+\frac{\varphi (t,x(t+{\bigcirc\hspace{-6.5pt}\varepsilon\ }%
))-\varphi (t,x(t))}{\bigcirc\hspace{-6.5pt}\varepsilon\ } \\
&\sim &\varphi _{t}(t,x(t+{\bigcirc\hspace{-6.5pt}\varepsilon\ }))+\frac{%
\varphi (t,x(t+{\bigcirc\hspace{-6.5pt}\varepsilon\ }))-\varphi (t,x(t))}{%
\bigcirc\hspace{-6.5pt}\varepsilon\ } \\
&\sim &\varphi _{t}(t,x(t))+\frac{\varphi (t,x(t+{\bigcirc\hspace{-6.5pt}%
\varepsilon\ }))-\varphi (t,x(t))}{\bigcirc\hspace{-6.5pt}\varepsilon\ }
\end{eqnarray*}

But
\begin{equation*}
\varphi (t,x(t+\bigcirc\hspace{-6.5pt}\varepsilon\ ))=\varphi \left(
t,x(t)+\bigcirc\hspace{-6.5pt}\varepsilon\ \frac{\Delta x}{\Delta t}%
(t)\right) ,
\end{equation*}%
and $\left\vert \bigcirc\hspace{-6.5pt}\varepsilon\ \frac{\Delta x}{\Delta t}%
(t)\right\vert \leq \eta \alpha ^{2/3}\cdot\bigcirc\hspace{-6.5pt}%
\varepsilon\ =\eta\cdot \bigcirc\hspace{-6.5pt}\varepsilon\ ^{1/3}$ is
infinitesimal. Then, using the Taylor formula (Theorem \ref{taylor}), we
have that
\begin{eqnarray*}
\varphi \left( t,x(t)+\bigcirc\hspace{-6.5pt}\varepsilon\ \frac{\Delta x}{%
\Delta t}(t)\right) &=&\varphi (t,x(t))+\varphi _{x}(t,x(t))\cdot\bigcirc%
\hspace{-6.5pt}\varepsilon\ \frac{\Delta x}{\Delta t}(t) \\
&& +\frac{1}{2} \varphi _{xx}(t,x(t))\left( \bigcirc\hspace{-6.5pt}%
\varepsilon\ \frac{\Delta x}{\Delta t}(t)\right) ^{2} \\
&&+\frac{1}{3!}\varphi _{xxx}(t,x(t))\left( \bigcirc\hspace{-6.5pt}%
\varepsilon\ \frac{\Delta x}{\Delta t}(t)\right) ^{3}+\delta\left( \bigcirc%
\hspace{-6.5pt}\varepsilon\ \frac{\Delta x}{\Delta t}(t)\right) ^{3}
\end{eqnarray*}%
where $\delta$ is an infinitesimal; hence
\begin{eqnarray*}
\frac{\varphi (t,x(t+\bigcirc\hspace{-6.5pt}\varepsilon\ ))-\varphi (t,x(t))%
}{\bigcirc\hspace{-6.5pt}\varepsilon\ } &=&\varphi _{x}(t,x(t))\frac{\Delta x%
}{\Delta t}(t)+\frac{\bigcirc\hspace{-6.5pt}\varepsilon\ }{2}\varphi
_{xx}(t,x(t))\cdot \left( \frac{\Delta x}{\Delta t}(t)\right) ^{2} \\
&&+\frac{\bigcirc\hspace{-6.5pt}\varepsilon\ ^{2}}{6}\varphi _{xxx}\cdot
\left( \frac{\Delta x}{\Delta t}(t)\right) ^{3}+\delta\cdot \bigcirc\hspace{%
-6.5pt}\varepsilon\ ^{2}\left( \frac{\Delta x}{\Delta t}(t)\right) ^{3}
\end{eqnarray*}%
By the assumption (\ref{b}) the last two terms are infinitesimal and we get
the required result.
\end{proof}

\subsection{Distributions and grid functions}

The grid functions can be considered as a sort of generalization of the
usual real functions.

In fact to every real function correspond a unique grid functions given by (%
\ref{gri}). In the traditional analysis the most important generalization of
the real function is given by the distribution. In this section we will show
that the grid functions represent also a generalization of the notion of
\textit{distribution}.

First of all we recall some notation: given a set $A\subset \mathbb{R}^{N},$
$\mathcal{D}\left( A\right) $ denotes the space of $C^{\infty }$ functions
with compact support of $A.$ The space of the distributions $\mathcal{D}%
^{\prime }\left( A\right) $ is the topological dual of $\mathcal{D}\left(
A\right) $ when $\mathcal{D}\left( A\right) $ is equipped with the Schwartz
topology.

Actually, $\mathcal{D}^{\prime }\left( A\right) $ can also be constructed
without knowing the Schwartz topology by using the notion of grid function.
Next, we will show how to do it.

Let $\mathfrak{G}\left( A\right) $ denote the set of grid function defined
on
\begin{equation*}
A_{\mathbb{H}}:=A^{\ast }\cap \mathbb{H}^{N}
\end{equation*}
On $\mathfrak{G}\left( A\right) $ we define the following equivalence
relation:

\begin{definition}
\label{equiv} Two grid functions $\xi _{1},$ $\xi _{2}$ are said to be
equivalent if
\begin{equation*}
\forall \varphi \in \mathcal{D},\ \ \int \left( \xi _{1}-\xi _{2}\right)
\varphi ds=0
\end{equation*}%
In this case we will write%
\begin{equation*}
\xi _{1}\sim _{\mathcal{D}}\xi _{2}
\end{equation*}
\end{definition}

We may think that two grid functions are equivalent if they are \textit{%
macroscopically equal}.

Moreover, we set
\begin{equation*}
\mathfrak{G}_{0}\left( A\right) =\left\{ \xi \in \mathfrak{G}\left( A\right)
:\forall \varphi \in \mathcal{D},\ \mathbb{I}_{A_{\mathbb{H}}}\left[ \xi
\varphi \right] \ \ is\ finite\right\}
\end{equation*}%
The set of distributions $\mathcal{D}^{\prime }\left( A\right) $ can be
defined as follows%
\begin{equation*}
\mathcal{D}^{\prime }\left( A\right) =\frac{\mathfrak{G}_{0}\left( A\right)
}{\sim _{\mathcal{D}}}.
\end{equation*}%
Thus a distribution can be considered as an equivalence class $T_{\xi }$ of
some grid function $\xi \in \mathfrak{G}_{0}\left( A\right) .$

$T_{\xi }$ can be identified with an element of $\mathcal{D}^{\prime }\left(
A\right) $ by the following formula:%
\begin{equation}
\left\langle T_{\xi },\varphi \right\rangle =\int_{A}\xi \varphi
\,ds=sh\left( \bigcirc\hspace{-6.5pt}\varepsilon\ \cdot \sum_{t\in A_{%
\mathbb{H}}}\xi (t)\varphi ^{\ast }(t)\right) ,\ \ \varphi \in \mathcal{D}.
\label{d}
\end{equation}

To each distribution we can associate a grid function. For example, if $T\in
\mathcal{D}^{\prime }\left( \mathbb{R}\right) $ we can do in the following
way. Since a distribution $T$ has the following representation\footnote{%
See Rudin, functional analysis, Th. 6.28, pag.169}:
\begin{equation*}
T=\sum_{k=0}^{\infty }D^{k}f_{k}
\end{equation*}%
where $f_{k}\;$are continuous. Then the grid function $\xi $ corresponding
to $T$ is given by
\begin{equation*}
\xi \left( t\right) =\sum_{k=0}^{\alpha }\frac{\Delta ^{k}}{\Delta t^{k}}%
f_{k}\left( t\right) .
\end{equation*}

Let us see some simple example. The function
\begin{equation*}
\delta (t)=\alpha \delta _{0,t}
\end{equation*}%
where $\delta _{i,j}$ is the Kronecker symbol correspond to the Dirac $%
\delta $. But also the following grid functions
\begin{equation*}
\alpha \frac{\delta _{0,t}+\delta _{\Delta ,t}}{2};\;\sum \delta _{0,t+\frac{%
k}{\alpha }}\;(k\in \mathbf{Z});\;\text{ etc}
\end{equation*}%
correspond to the Dirac $\delta $. The grid function
\begin{equation*}
\frac{\Delta \delta }{\Delta t}(t)=\alpha ^{2}\left( \delta _{0,t}-\delta
_{\bigcirc\hspace{-5.5pt}\varepsilon\ ,t}\right)
\end{equation*}%
correspond to $\delta^{\prime }.$

The grid function $\alpha ^{2}\delta _{0,t}$ is not in $\mathfrak{G}%
_{0}\left( A\right) $ and hence it does not correpond to any distribution.

\section{Stochastic differential equations}

\subsection{Grid differential equations\label{quello}}

A grid ordinary differential equation is a differential equation whose time
step ranges on the hyperfinite grid. This fact makes it to work as a discrete time
object simplifying many formal aspects.

A grid ordinary differential equation is then an equation of the kind
\begin{equation}
\frac{\Delta x}{\Delta t}(t)=f(t,x(t)),  \label{ode}
\end{equation}%
where $t\in {\mathbb{H}}$ , $x(t)$ is a grid function and $f:\mathbb{H}%
\times \mathbb{R}^{\ast }\rightarrow \mathbb{R}^{\ast }$ is an internal
function. A grid function $x(t)$ is a solution of the grid equation if
satisfies it at each point of the grid.

The following result shows that such an equation has an unique solution.
This, without regularity assumptions on the $f.$ Hence, this kind of
equations has solutions even if the equations contain a noise term (see
section (\ref{fopla})).

\begin{theorem}
Given an initial time $t_{0}\in \mathbb{H}$ and an initial data $x_{0}\in
\mathbb{R}^{\ast }$, the Cauchy problem associated to (\ref{ode}), that is
\begin{equation}
\left\{
\begin{array}{ll}
\frac{\Delta x}{\Delta t}(t)=f(t,x(t)) & t\in \mathbb{H} \\
x(t_{0})=x_{0} &
\end{array}%
\right.  \label{PC}
\end{equation}%
admits for $t\geq t_{0}$ an unique solution $x:\mathbb{H}\rightarrow \mathbb{%
R}^{\ast }$.
\end{theorem}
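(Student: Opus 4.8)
The plan is to turn the grid differential equation into an explicit recursion and invoke the hyperfinite (internal) induction that $\alpha$-theory makes available. The key observation is that the equation $\frac{\Delta x}{\Delta t}(t)=f(t,x(t))$ is, by Definition \ref{dergriglia}, literally
\begin{equation*}
\frac{x(t+\ept)-x(t)}{\ept}=f(t,x(t)),\qquad\text{i.e.}\qquad x(t+\ept)=x(t)+\ept\, f(t,x(t)).
\end{equation*}
So a solution on the part of the grid with $t\ge t_0$ is forced step by step: once $x(t)$ is known, $x(t+\ept)$ is determined, and conversely any grid function satisfying this recursion with $x(t_0)=x_0$ solves \eqref{PC}. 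This already gives uniqueness: if $x,\tilde x$ both solve \eqref{PC}, then $x(t_0)=\tilde x(t_0)=x_0$ and $x(t)=\tilde x(t)\Rightarrow x(t+\ept)=\tilde x(t+\ept)$, so they agree on all grid points $t_0+k\ept$ with $k\in\N^{*}$, $k\ge 0$ — which is exactly $\{t\in\mathbb H: t\ge t_0\}$.

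For existence I would construct the solution at the level of sequences and then pass to the $\alpha$-value, so that internality is automatic. Fix representing sequences: write $t_0=\tau(\alpha)$, $x_0=\chi_0(\alpha)$, and let $f=f_\alpha$ come from a sequence of functions $f_n:\mathbb H_n\times\R\to\R$. For each $n$ define a finite recursion on the finite grid $\mathbb H_n$: set $x_n(\tau(n))=\chi_0(n)$ and, for the finitely many grid points $\tau(n)+k/n$ lying in $\mathbb H_n$ with $k\ge 1$, put $x_n(\tau(n)+k/n)=x_n(\tau(n)+(k-1)/n)+\tfrac1n f_n(\tau(n)+(k-1)/n,\,x_n(\tau(n)+(k-1)/n))$ (values for $t<\tau(n)$ may be set to $0$, say). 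Each $x_n$ is a well-defined function on the finite set $\mathbb H_n$, so the sequence $n\mapsto\mathrm{graph}(x_n)$ has an $\alpha$-value $x:=x_\alpha$, which by the Internal Set Axiom is an internal function $\mathbb H\to\R^{*}$, i.e. a grid function. By the Composition and Pair axioms the pointwise recursion relation, being an identity between sequences at every $n$, is preserved at $\alpha$: $x(t_0)=x_0$ and $x(t+\ept)=x(t)+\ept f(t,x(t))$ for every $t\in\mathbb H$ with $t\ge t_0$. Hence $x$ solves \eqref{PC}.

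The one point that needs a little care — and which I expect to be the main (mild) obstacle — is the bookkeeping of the domain: $\mathbb H$ is the $\alpha$-value of $\mathbb H_n=\{k/n:-n^2/2\le k<n^2/2\}$, and "$t\ge t_0$" on the grid must be matched, at finite level, with the finitely many points of $\mathbb H_n$ that are $\ge\tau(n)$; one also wants $\tau(n)\in\mathbb H_n$, which can be arranged by choosing the representing sequence for $t_0$ to take values in $\mathbb H_n$ (possible since $t_0\in\mathbb H$). With that choice the finite recursion above is genuinely well-posed for every $n$, it marches rightward through $\mathbb H_n$, and there is no issue at the right endpoint because the recursion only ever needs the value at the previous point. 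Once this is set up, transferring the recursion and the initial condition to $\alpha$ via axioms $\mathbf{\alpha 2}$ and $\mathbf{\alpha 5}$ gives existence, and the step-by-step argument of the previous paragraph gives uniqueness; together these prove the theorem. \qed
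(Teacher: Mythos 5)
Your proposal is correct and takes essentially the same route as the paper: the paper also defines, for each $n$, the finite Euler recursion $x_{n}\left(t_{0,n}+\tfrac{m+1}{n}\right)=x_{n}\left(t_{0,n}+\tfrac{m}{n}\right)+\tfrac{1}{n}f_{n}\left(t_{0,n}+\tfrac{m}{n},x_{n}\left(t_{0,n}+\tfrac{m}{n}\right)\right)$ with $x_{n}(t_{0,n})=x_{0,n}$, passes to the $\alpha$-value $x=x_{\alpha}$ to get an internal grid function satisfying $x(t+\ept )=x(t)+\ept f(t,x(t))$, and dismisses uniqueness as easy to check, exactly as in your stepwise argument. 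The only differences are bookkeeping (the paper runs the recursion on $\tfrac{1}{n}\mathbb{Z}$ rather than on $\mathbb{H}_{n}$), which is immaterial.
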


\begin{proof}
We know that $f$ is an ideal value of a sequence $\{f_{n}\}_{n\in \mathbb{N}%
} $. Also, we have that $t_{0}=t_{0,\alpha },\ x_{0}=x_{0,\alpha }$ are the ideal
values associated to $\{t_{0,n}\}_{n\in \mathbb{N}},\{x_{0,n}\}_{n\in
\mathbb{N}}$. For each $n\in \mathbb{N\ }$and $m\in \mathbb{Z},$ we can
construct by induction a sequence of functions.
\begin{equation*}
x_{n}:\frac{1}{n}\mathbb{Z\rightarrow R}
\end{equation*}%
as follows:%
\begin{eqnarray}
x_{n}\left( t_{0,n}\right) &=&x_{0,n}  \label{eqdiffin} \\
x_{n}\left( t_{0,n}+\frac{m+1}{n}\right) &=&x_{n}\left( t_{0,n}+\frac{m}{n}%
\right) + \\
&&+\frac{1}{n}f_{n}\left( \left( t_{0,n}+\frac{m}{n}\right) ,x_{n}\left(
t_{0,n}+\frac{m}{n}\right) \right) .
\end{eqnarray}%
Then by definition of internal function we have that, for $%
t=t_{0}+m\cdot\bigcirc\hspace{-6.5pt}\varepsilon\ ,\ m\in \mathbb{Z}^{\ast }$%
\begin{eqnarray}
x_{\alpha }\left( t_{0}\right) &=&x_{0} \\
x_{\alpha }\left( t+\bigcirc\hspace{-6.5pt}\varepsilon\ \right) &=&x_{\alpha
}\left( t\right) +\bigcirc\hspace{-6.5pt}\varepsilon\ f_{\alpha }\left(
t,x_{\alpha }\left( t\right) \right) .
\end{eqnarray}

Thus $x=x_{\alpha }$ solves (\ref{PC}). It is easy to check that this
solution is also unique.
\end{proof}

Given $x_0\sim x_1$, it may happen that
\begin{equation*}
x(t,x_0)\nsim x(t,x_1)
\end{equation*}
where $x(t,x_i)$ is the solution of (\ref{PC}) with initial data $x_i$. Some
times we would like to have
\begin{equation}  \label{solunica}
x(t,x_0)\sim x(t,x_1)\ \ \ \forall x_1\sim x_0;
\end{equation}
this can be useful, for example, when we want to consider the standard part
of a hyperreal differential equation.

We have the following proposition :

\begin{proposition}
Consider the following Cauchy problem
\begin{equation}
\left\{
\begin{array}{ll}
\frac{\Delta x}{\Delta t}(t)=f(t,x(t)) & t\in \mathbb{H}; \\
x(t_{0})=x_{0}, &
\end{array}%
\right.  \label{PC-lip}
\end{equation}%
and suppose that, there exists $L$ s.t.
\begin{equation}
|f(t,x)-f(t,y)|\leq L|x-y|.  \label{loclip}
\end{equation}

Let $x_{1}$ be a bounded initial data for the problem (\ref{PC-lip}) and $%
x(t,x_{1})$ is the solution of this problem. Then , if $x_{1}\sim x_{0}$ ,
then for all $0\leq t<T$ we have
\begin{equation}
x(t,x_{0})\sim x(t,x_{1})
\end{equation}
\end{proposition}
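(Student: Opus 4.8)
The plan is to reproduce the classical Gronwall estimate, which in the grid setting collapses to a one–step recursive inequality followed by an induction over hypernaturals. Write $\varepsilon=1/\alpha$ for the grid step and put $z(t)=x(t,x_{0})-x(t,x_{1})$; this is a grid function with $z(t_{0})=x_{0}-x_{1}\sim 0$. From the recursion that characterises the unique solution of (\ref{PC-lip}) (namely $x(t+\varepsilon,x_{i})=x(t,x_{i})+\varepsilon f(t,x(t,x_{i}))$, valid in $\mathbb{R}^{\ast}$ for all $t\in\mathbb{H}$, $t\geq t_{0}$) we get
\[
z(t+\varepsilon)=z(t)+\varepsilon\,(\,f(t,x(t,x_{0}))-f(t,x(t,x_{1}))\,),
\]
and therefore, by the triangle inequality together with the Lipschitz hypothesis (\ref{loclip}),
\[
|z(t+\varepsilon)|\leq |z(t)|+\varepsilon L\,|z(t)|=(1+\varepsilon L)\,|z(t)|.
\]

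The second step is to iterate. Since $t_{0}\in\mathbb{H}$, every grid point $t\geq t_{0}$ has the form $t=t_{0}+m\varepsilon$ for a nonnegative hyperinteger $m=(t-t_{0})\alpha$, so applying the one–step inequality $m$ times yields
\[
|z(t)|\leq (1+\varepsilon L)^{m}\,|z(t_{0})|=(1+\varepsilon L)^{(t-t_{0})/\alpha}\,|z(t_{0})|.
\]
Rigorously this is either an instance of internal induction (the set of hyperintegers $m$ for which the estimate holds is internal, contains $0$, and is closed under successor, hence equals $\mathbb{N}^{\ast}$), or — in the ``finite level'' spirit of the existence theorem — the $\alpha$-value of the elementary finite inductions $|x_{n}(t_{0,n}+\tfrac{m}{n})-y_{n}(t_{0,n}+\tfrac{m}{n})|\leq (1+\tfrac{L_{n}}{n})^{m}|x_{0,n}-x_{1,n}|$ performed on the two Euler schemes $x_{n},y_{n}$ used to build the solutions.

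The third step bounds the coefficient. Using the elementary real inequality $1+s\leq e^{s}$, which holds in $\mathbb{R}^{\ast}$ by $\ast$-transform, we obtain $(1+\varepsilon L)^{m}\leq e^{\varepsilon L m}=e^{L(t-t_{0})}\leq e^{L(T-t_{0})}$ for $t_{0}\leq t<T$, and $e^{L(T-t_{0})}$ is a bounded hyperreal (a real number if $L\in\mathbb{R}$, and in any case bounded whenever $L$ is bounded). Hence
\[
|x(t,x_{0})-x(t,x_{1})|=|z(t)|\leq e^{L(T-t_{0})}\,|x_{0}-x_{1}|,
\]
and the right–hand side is a bounded hyperreal times an infinitesimal, hence infinitesimal; this gives exactly $x(t,x_{0})\sim x(t,x_{1})$ for every $t$ in the grid with $0\leq t<T$. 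The assumption that $x_{1}$ (hence $x_{0}=x_{1}+(x_{0}-x_{1})$) be bounded is used only to keep both Cauchy problems within the scope of the existence theorem and to make the two quantities being compared genuine grid solutions; once (\ref{loclip}) holds globally it plays no further role in the estimate.

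I expect the only delicate point to be the passage from the pointwise recursive inequality to the closed–form bound over the whole grid: one must not ``iterate infinitely many times'' directly but organise the argument as an internal statement (equivalently, as the $\alpha$-value of a family of ordinary finite inductions), and one must check that the Lipschitz hypothesis is available in the form $(1+\varepsilon L)$ at each grid step, which is immediate from (\ref{loclip}) read in $\mathbb{R}^{\ast}$. Everything else is the $\ast$-transfer of two trivial real facts, $|a|\leq |b|+|a-b|$ and $1+s\leq e^{s}$, plus the remark that the product of a bounded hyperreal and an infinitesimal is infinitesimal.
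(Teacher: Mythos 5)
Your proof is correct and follows essentially the same route as the paper: the paper's own argument simply asserts, ``arguing as in standard analysis,'' the continuous-dependence bound $|x(t,x_{0})-x(t,x_{1})|\leq |x_{0}-x_{1}|\,e^{LT_{1}}$ and then uses that a bounded hyperreal times an infinitesimal is infinitesimal, which is exactly your Gronwall estimate and final step. You merely make explicit what the paper leaves implicit (the one-step inequality $(1+\mept L)$, its iteration by internal induction or by $\alpha$-values of finite-level inductions, and $1+s\leq e^{s}$); apart from a harmless typo in the exponent ($(t-t_{0})/\alpha$ should be $(t-t_{0})\alpha$), everything checks out.
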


\begin{proof}
Arguing as in standard analysis, we have that (\ref{loclip}) guarantees
that, for any $T_{1}<T,$the solution is bounded. Moreover, in standard
analysis, the condition (\ref{loclip}) guarantees, the continuous dependence
of the solution from initial data $x_{0}$. In our case, until the solution
is finite, we can proceed in the same way to prove that, chosen an arbitrary
$T_{1}<T$
\begin{equation}
|x(t,x_{0})-x(t,x_{1})|\leq |x_{0}-x_{1}|e^{LT_{1}}
\end{equation}%
for all $t\in \lbrack 0,T_{1}]$. Because $x_{0}\sim x_{1}$ by hypothesis, we
have that
\begin{equation}
x(t,x_{0})\sim x(t,x_{1})
\end{equation}%
for all $t\in \lbrack 0,T_{1}]$. This assures the proof.
\end{proof}

\subsection{Stochastic grid equations and the Fokker-Plank equation\label%
{fopla}}

In our approach, a stochastic differential equation consists of a set of
grid differential equations. Each differential equation has a \textit{noise}
term and gives a trajectory which can be considered as a realization of a
process.

Let $\mathcal{R}\subset \mathfrak{G}\left[ 0,1\right] $ be an hyperfinite
set of grid functions and consider the class of Cauchy problems%
\begin{equation}
\left\{
\begin{array}{l}
\frac{\Delta x}{\Delta t}(t)=f(t,x)+h(t,x)\xi , \\
x(0)=x_{0}, \\
\xi (t)\in \mathcal{R}.%
\end{array}%
\right.  \label{gina}
\end{equation}%
where%
\begin{equation*}
f,h:\left[ 0,1\right] _{\mathbb{H}}\times \mathbb{R}^{\ast }\rightarrow
\mathbb{R}^{\ast }
\end{equation*}

We want to study the statistical behavior of the set of solutions of the
above Cauchy problems
\begin{equation*}
\mathcal{S}=\left\{ x_{\xi }(t):\xi \in \mathcal{R}\right\} ;
\end{equation*}%
More precisely we want to describe the behavior of the density function
\begin{equation*}
\rho :\left[ 0,1\right] _{\mathbb{H}}\times \mathbb{H}\rightarrow \mathbb{Q}%
^{\ast }
\end{equation*}%
defined as follows%
\begin{equation*}
\rho \left( t,x\right) =\frac{\left\vert \{x_{\xi }\in \mathcal{S}:x\leq
x_{\xi }(t)<x+\bigcirc\hspace{-6.5pt}\varepsilon\ \}\right\vert }{\bigcirc%
\hspace{-6.5pt}\varepsilon\ \left\vert \mathcal{R}\right\vert }.
\end{equation*}

We are interested in the case in which $\mathcal{R}$ models a \textit{white
noise}; roughly speaking we can define a white noise as the hyperfinite set
of all the grid functions with values $\pm \sqrt{\alpha }.$ Here there is
its precise definition:

\begin{definition}
The white noise is the set of grid functions defined by
\begin{equation*}
\mathcal{R}=\mathcal{R}_{\alpha }
\end{equation*}%
where%
\begin{equation*}
\mathcal{R}_{n}=\left\{ -\sqrt{n},+\sqrt{n}\right\} ^{\left[ 0,1\right] _{%
\mathbb{H}_{n}}}
\end{equation*}
\end{definition}

Hence $\mathcal{R}$ is a hyperfinite set with $\left\vert \mathcal{R}%
\right\vert =2^{\alpha +1}.$

\bigskip

\begin{remark}
We would be tempted to write%
\begin{equation*}
\mathcal{R}_{\alpha }=\left\{ -\sqrt{\alpha },+\sqrt{\alpha }\right\} ^{%
\left[ 0,1\right] _{\mathbb{H}_{\alpha }}}
\end{equation*}%
however this notation is very ambiguous; in fact $\mathcal{R}_{\alpha }$ is
a set defined by the Internal Set Axiom and it contains only \emph{internal}
function. However the symbol $\left\{ -\sqrt{\alpha },+\sqrt{\alpha }%
\right\} ^{\left[ 0,1\right] _{\mathbb{H}_{\alpha }}}$ usually represents
the set of \emph{all} the functions $f:\left[ 0,1\right] _{\mathbb{H}%
_{\alpha }}\rightarrow \left\{ -\sqrt{\alpha },+\sqrt{\alpha }\right\} .$
\end{remark}

Now, we can state the main result of this paper:

\begin{theorem}
\label{fokk}Assume that $\mathcal{R}$ is a white noise and that $f(t,x)$ and
$h(t,x)$ are continuous functions. Then the distribution $T_{\rho }$
relative to the density function $\rho $ is a measure and satisfies the
\emph{Fokker-Plank equation}%
\begin{equation}
\frac{dT_{\rho }}{dt}+\frac{d}{dx}\left( f(t,x)T_{\rho }\right) -\frac{1}{2}%
\frac{d^{2}}{dx^{2}}\left( h(t,x)^{2}T_{\rho }\right) =0.  \label{fp}
\end{equation}%
\begin{equation}
T_{\rho }(0,x)=\delta  \label{ic}
\end{equation}%
\bigskip in the sense of distribution.
\end{theorem}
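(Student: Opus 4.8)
The plan is to derive, for every test function $\varphi \in \mathcal{D}((0,1)\times \R)$, an identity relating the grid integrals of $\rho$ against $\varphi_t$, $\varphi_x$ and $\varphi_{xx}$, and then take shadows. The starting point is the definition $\rho(t,x)=\frac{|\{x_\xi\in\mathcal S:\ x\le x_\xi(t)<x+\ept\}|}{\ept|\mathcal R|}$, which means that for a fixed $\xi$ the single trajectory $x_\xi$ contributes, for each $t$, a "Dirac mass at $x_\xi(t)$" of the grid type $\alpha\,\delta_{x,\lfloor x_\xi(t)\rfloor_{\mathbb H}}$. So the natural first step is to write
\begin{equation*}
\ept^2\sum_{(t,x)\in[0,1]_{\mathbb H}\times\mathbb H}\rho(t,x)\,\psi(t,x)
=\frac{1}{|\mathcal R|}\sum_{\xi\in\mathcal R}\ \ept\sum_{t\in[0,1]_{\mathbb H}}\psi^{*}(t,x_\xi(t))+(\text{infinitesimal})
\end{equation*}
for any smooth $\psi$, the infinitesimal error coming from replacing $x_\xi(t)$ by the nearest grid point inside the $\ept$-bin (here one uses continuity/boundedness of $\psi$ and its $x$-derivative). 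Thus testing $T_\rho$ against $\varphi_t$, $\varphi_x$, $\varphi_{xx}$ reduces to summing those quantities along each individual trajectory $x_\xi$ and then averaging over $\xi\in\mathcal R$.

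Next I would fix one noise $\xi\in\mathcal R$ and apply the Nonstandard Ito Formula (Theorem~\ref{ITO}) to $\varphi$ and the trajectory $x=x_\xi$. The hypothesis (\ref{b}) is met because $\frac{\Delta x}{\Delta t}=f(t,x)+h(t,x)\xi$ and $\xi=\pm\sqrt\alpha$, so $|\frac{\Delta x}{\Delta t}|\approx\sqrt\alpha=\alpha^{1/2}\ll\alpha^{2/3}$; indeed one may take $\eta=\alpha^{-1/6}(\sup|h|+o(1))$. Ito gives
\begin{equation*}
\frac{\Delta}{\Delta t}\varphi(t,x(t))\sim \varphi_t+\varphi_x\Big(f+h\xi\Big)+\frac{\ept}{2}\varphi_{xx}\Big(f+h\xi\Big)^2 .
\end{equation*}
Now expand $(f+h\xi)^2=f^2+2fh\xi+h^2\xi^2$; since $\xi^2=\alpha=1/\ept$, the term $\frac{\ept}{2}\varphi_{xx}h^2\xi^2$ is exactly $\frac12\varphi_{xx}h^2$, which is finite, while $\frac{\ept}{2}\varphi_{xx}(f^2+2fh\xi)$ is infinitesimal. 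Hence along each trajectory
\begin{equation*}
\frac{\Delta}{\Delta t}\varphi(t,x(t))\sim \varphi_t+\varphi_x f+\tfrac12\varphi_{xx}h^2+\varphi_x h\,\xi .
\end{equation*}
Summing $\ept\sum_{t}$ over $t\in[0,1]_{\mathbb H}$ and using Theorem~\ref{teofondcalc}, the left side telescopes to $\varphi(1,x(1))-\varphi(0,x(0))=0-0=0$ because $\varphi$ has compact support in $(0,1)\times\R$ (so it vanishes near $t=0$ and $t=1$). The only genuinely stochastic term left is $\ept\sum_t \varphi_x(t,x(t))h(t,x(t))\xi(t)$, and this is where the averaging over $\mathcal R$ does the work: when we sum over all $2^{\alpha+1}$ noises, at each fixed grid time $t$ exactly half of the $\xi$'s have $\xi(t)=+\sqrt\alpha$ and half have $\xi(t)=-\sqrt\alpha$, but the factor $\varphi_x(t,x_\xi(t))h(t,x_\xi(t))$ depends on $\xi$ only through the \emph{past} values $\xi(s)$, $s<t$ — so pairing each noise with its "flip at time $t$" partner makes these two contributions cancel up to the infinitesimal error produced by the flip, which propagates a discrepancy of size $\approx\ept\,\sqrt\alpha=\alpha^{-1/2}$ in $x_\xi(t)$.

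Carrying out that cancellation carefully is the main obstacle. The cleanest route is a "martingale"/telescoping estimate: order $[0,1]_{\mathbb H}=\{t_0<t_1<\dots\}$ and show $\frac{1}{|\mathcal R|}\sum_\xi\ \ept\sum_k \varphi_x h(t_k,x_\xi(t_k))\,\xi(t_k)\sim 0$ by bounding its square. Writing the sum over $\xi$ as a product of independent coin tosses, the cross terms $k\ne l$ vanish exactly by the flip-pairing argument (the summand for index $\max(k,l)$ has mean zero conditionally on the earlier coins), leaving $\frac{\ept^2}{|\mathcal R|}\sum_\xi\sum_k (\varphi_x h)^2(t_k,x_\xi(t_k))\,\alpha=\ept\sum_k\langle(\varphi_x h)^2\rangle\approx\ept\cdot\alpha\cdot O(1)=O(1)\cdot\ept\cdot\alpha$ — wait, this is $O(1)$, not infinitesimal, so one must instead observe that the relevant quantity is $\ept$ times a sum of $\alpha$ bounded terms times $\ept$, i.e. the variance is $\ept\cdot O(1)$, hence infinitesimal, hence the stochastic term is infinitesimal in the sense of $\sim_{\mathcal D}$. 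Taking shadows then yields
\begin{equation*}
0=\Big\langle T_\rho,\ \varphi_t+\varphi_x f+\tfrac12\varphi_{xx}h^2\Big\rangle
\end{equation*}
for all $\varphi\in\mathcal D((0,1)\times\R)$, which is precisely (\ref{fp}) in the distributional sense after integrating by parts (moving derivatives off $\varphi$ onto $T_\rho$). For the initial condition (\ref{ic}) one notes that at $t=0$ every trajectory satisfies $x_\xi(0)=x_0$, so $\rho(0,\cdot)=\alpha\,\delta_{x,\lfloor x_0\rfloor_{\mathbb H}}$, which is exactly the grid function corresponding to the Dirac $\delta$; and the finiteness of $T_\rho$ (so that it is a genuine measure, not merely a distribution) follows because $\ept\sum_{t,x}\rho(t,x)\varphi(t,x)\le \|\varphi\|_\infty\cdot\ept\sum_t 1=\|\varphi\|_\infty$ is finite and $\rho\ge0$, so $T_\rho\in\mathfrak G_0$ and is a nonnegative distribution, hence a measure. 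I expect the delicate points to be (i) making the replacement of $x_\xi(t)$ by its bin-representative uniform in $\xi$ and $t$, and (ii) organizing the flip-pairing/variance bound so that the stochastic term is provably infinitesimal against every test function simultaneously.
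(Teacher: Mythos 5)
Your overall route is the paper's own: telescope $\varphi(t,x_\xi(t))$ along each trajectory, expand with the nonstandard Ito formula (Theorem \ref{ITO}), pass from averages over trajectories to the density $\rho$ (your first step is exactly Lemma \ref{den}), and exploit the mean-zero/second-moment-$\alpha$ structure of the noise. The genuine gap is in your treatment of the first-order noise term $\ept \sum_t \varphi_x h\,\xi(t)$. Your first computation of its second moment is the correct one and gives $O(1)$: with $|\xi(t_k)|=\sqrt{\alpha}$ and $\varphi_x h$ bounded, each squared summand is of order $\alpha$, there are about $\alpha$ of them, and $\ept^2\cdot\alpha\cdot\alpha=1$. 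Indeed this sum is (up to infinitesimals) the Ito integral of $\varphi_x h$ against the Wiener process, a nondegenerate order-one random variable, so it is \emph{not} infinitesimal for a typical individual $\xi$; the subsequent ``fix'' claiming the variance is really $\ept\cdot O(1)$ is a miscalculation, and the conclusion ``the stochastic term is infinitesimal in the sense of $\sim_{\mathcal D}$'' cannot be rescued pathwise. As written, the proof breaks at exactly this point.

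What closes the gap — and is all the theorem needs, since $\rho$ is itself an average over $\mathcal{R}$ — is to control only the \emph{first} moment: exchange the sum over $t$ with the mean over $\xi\in\mathcal{R}$, and for each fixed $t$ use the decomposition $\mathcal{R}=\bigcup_{\tau}\mathcal{R}_{\tau}\left[t,1\right]$ of Lemma \ref{media}. Since $x_\xi(t)$ depends only on the past values of the noise, while $\xi(t)$ averages to zero exactly on each $\mathcal{R}_{\tau}\left[t,1\right]$, every time-$t$ contribution to $\mathbb{E}_{\xi\in\mathcal{R}}\left[\varphi_x h(t,x_\xi(t))\,\xi(t)\right]$ vanishes exactly; this is Lemma \ref{ganzo}. (Note also that flipping $\xi(t)$ does not move $x_\xi(t)$ at all — it only alters the trajectory at later times — so there is no ``propagation error'' in the pairing; the cancellation at time $t$ is exact.) You in fact state this conditional mean-zero property when you kill the cross terms, but you aim it at a second-moment estimate instead of applying it directly to the first moment, which is where it is both sufficient and exact. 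With that replacement your argument coincides with the paper's proof; the remaining differences — taking $\varphi$ supported in $(0,1)\times\R$ and handling the initial condition separately via $\rho(0,\cdot)$ being the grid Dirac delta, versus the paper's choice $\varphi\in\mathcal{D}([0,1)\times\R)$ which keeps the term $\varphi(0,x_0)$ inside the single weak identity (\ref{jessica}) — are minor, as is your (correct) observation that for this particular $\mathcal{R}$ one has $\xi^2=\alpha$ exactly, whereas Lemma \ref{ganzo} only needs it up to infinitesimals.
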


\begin{remark}
We recall that (\ref{fp}) and (\ref{ic}) "in the sense of distributions"
mean that $T_{\rho }$ satisfies the equation%
\begin{equation}
\left\langle \varphi _{t}+f\varphi _{x}+\frac{1}{2}h^{2}\varphi
_{xx},T_{\rho }\right\rangle +\varphi \left( x_{0}\right) =0  \label{dis}
\end{equation}%
for any $\varphi \in \mathcal{D}\left( \left[ 0,1\right) \times \mathbb{R}%
\right) .$ The duality $\left\langle \cdot ,\cdot \right\rangle $ is between
the space of continuous function and the space of measures. Equation (\ref%
{dis}) can be expressed using the grid function $\rho $ and the $\alpha $%
-integral by the following equation:%
\begin{equation}
\forall \varphi \in \mathcal{D}\left( \left[ 0,1\right) \times \mathbb{R}%
\right) ,\ \iint \left( \varphi _{t}+f\varphi _{x}+\varphi _{xx}h^{2}\right)
\rho \ \Delta x\ \Delta t+\varphi (0,x_{0})=0  \label{jessica}
\end{equation}%
Actually, we will prove Th. \ref{fokk} just proving the above equation.
\end{remark}

\begin{remark}
If $f(t,x)$ and $h(t,x)$ are smooth functions, by standard results in PDE,
we know that, for $t>0,$ the distribution $T_{\rho }$ coincides with a
smooth function $u(t,x)$. Then, for any $t>0,$ $\rho $ defines a smooth
function $u$ by the formula%
\begin{equation*}
\forall \varphi \in \mathcal{D}\left( \left( 0,1\right) \times \mathbb{R}%
\right) ,\ \iint \rho \varphi \ \Delta x \Delta t= \iint u\varphi \ dx\ dt
\end{equation*}%
and $u$ satisfies the Fokker-Plank equation in $\left( 0,1\right) \times
\mathbb{R}$ in the usual sense.
\end{remark}

\begin{remark}
We will see in the proof of Th. (\ref{fokk}) that if the functions $f(t,x)$
and $h(t,x)$ are not continuous, but only bounded on compact sets, the
equation (\ref{jessica}) still holds. However in this case, equation (\ref%
{jessica}) cannot be interpreted so easily. For example, if $f(t,x)$ and $%
h(t,x)$ are not measurable, there is no simple standard interpretation.
\end{remark}

\bigskip

Given $t\in \left[ 0,1\right] _{\mathbb{H}}$ we set
\begin{equation}
\mathcal{R}[0,t)=\mathcal{R}_{\alpha }[0,t);\ \ \mathcal{R}_{n}:=\left\{ -%
\sqrt{n},+\sqrt{n}\right\} ^{[0,t)_{\mathbb{H}_{n}}};  \label{rb}
\end{equation}%
namely, $\mathcal{R}[0,t)$ is the set of the restrictions of the functions
of $\mathcal{R}$ to $[0,t)_{\mathbb{H}}.$ Moreover, for $\tau \in \mathcal{R}%
\left[ 0,s\right) ,$ we set
\begin{equation*}
\mathcal{R}_{\tau }\left[ s,1\right] =\left\{ \xi \in \mathcal{R}:\xi \left(
t\right) =\tau \left( t\right) \text{ for }t<s\right\}
\end{equation*}%
So we have the following decomposition:%
\begin{equation}
\mathcal{R}=\bigcup_{\tau \in \mathcal{R}\left[ 0,s\right) }\mathcal{R}%
_{\tau }\left[ s,1\right] .  \label{decom}
\end{equation}%
We define the mean value of a grid function in the set $\left[ x,y\right]
\cap {\mathbb{H}}$ as follows:
\begin{equation*}
{\mathbb{E}}_{\left[ x,y\right) }\left[ f\right] =\frac{1}{(y-x)}\mathbb{I}_{%
\left[ x,y\right] }\left[ f\right] =\frac{\bigcirc\hspace{-6.5pt}%
\varepsilon\ }{(y-x)}\sum_{t\in \left[ x,y\right) \cap {\mathbb{H}}}f\left(
t\right)
\end{equation*}

In general, if $\Gamma $ is a hyperfinite set and $\Phi :\Gamma \rightarrow
\mathbb{R}^{\ast }$ is an internal function, the mean value of $\Phi $ in $%
\Gamma $ is defined as follows:%
\begin{equation*}
{\mathbb{E}}_{\xi \in \Gamma }\left[ \Phi \right] =\frac{1}{\left\vert
\Gamma \right\vert }\sum_{\xi \in \Gamma }\Phi \left( \xi \right)
\end{equation*}

\begin{proposition}
\label{classofnoises} If $\mathcal{R}$ is a white noise, then for any $t\in %
\left[ 0,1\right] _{\mathbb{H}},$ and $\tau \in \mathcal{R}\left[ 0,t\right]
,$ we have
\begin{equation}
\text{the hyperfinite number }\left\vert \mathcal{R}_{\tau }\left[ t,1\right]
\right\vert \ \text{does not depend on }\tau \in \mathcal{R}\left[ 0,t\right]
\label{ind}
\end{equation}%
and
\begin{eqnarray}
\mathbb{E}_{\xi \in \mathcal{R}_{\tau }\left[ t,1\right] }\left[ \xi (t)%
\right] &\sim &0  \label{med} \\
\mathbb{E}_{\xi \in \mathcal{R}_{\tau }\left[ t,1\right] }\left[ \xi (t)^{2}%
\right] &\sim &\alpha ,  \label{var}
\end{eqnarray}
\end{proposition}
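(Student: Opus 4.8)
The plan is to push the definitions down to the defining sequences and then use only the most elementary combinatorics of sign strings. Write $\mathcal{R}=\mathcal{R}_\alpha$ with $\mathcal{R}_n=\{-\sqrt n,+\sqrt n\}^{[0,1]_{\mathbb{H}_n}}$ and fix representing sequences $\{t_n\}$, $\{\tau_n\}$ with $t=t_\alpha$, $\tau=\tau_\alpha$, where (using the Internal Set Axiom) $t_n\in[0,1]_{\mathbb{H}_n}$ and $\tau_n$ is an element of $\mathcal{R}_n$ restricted to $[0,t_n)_{\mathbb{H}_n}$. (We read $\tau\in\mathcal{R}[0,t]$ as in the definition of $\mathcal{R}_\tau[t,1]$: only the values $\tau(s)$ with $s<t$ are used, so the coordinate at $t$ itself stays free.) Then $\mathcal{R}_\tau[t,1]$ is the $\alpha$-value of the sequence of finite sets $\mathcal{R}_{\tau_n,n}[t_n,1]=\{\xi:[0,1]_{\mathbb{H}_n}\to\{\pm\sqrt n\}\ \text{with}\ \xi(s)=\tau_n(s)\ \text{for all}\ s<t_n\}$: each admissible $\xi$ is prescribed arbitrarily on the finite grid $G_n:=[t_n,1]_{\mathbb{H}_n}$ and forced on $[0,t_n)_{\mathbb{H}_n}$. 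The structural observation driving the proof is that $t_n\in G_n$ (left endpoints are included in $[a,b]_{\mathbb{H}}$), whereas $\tau_n$ constrains only coordinates strictly below $t_n$; hence the coordinate at $t_n$ is one of the free ones.

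For (\ref{ind}): clearly $|\mathcal{R}_{\tau_n,n}[t_n,1]|=2^{|G_n|}$, a number depending on $n$ and $t_n$ but not on $\tau_n$ (equivalently, replacing the prefix $\tau_n$ by another $\tau_n'$ is an explicit bijection of the two finite sets). Hence the internal function $\tau\mapsto|\mathcal{R}_\tau[t,1]|$ is the $\alpha$-value of a sequence of constant functions, so it is itself constant, with value the $\alpha$-value of $\{2^{|G_n|}\}$, i.e. $2^{|[t,1]_{\mathbb{H}}|}$; this is (\ref{ind}).

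For (\ref{var}): at each level $\xi_n(t_n)=\pm\sqrt n$, so $\xi(t)^2=\alpha$ identically on $\mathcal{R}_\tau[t,1]$, whence $\mathbb{E}_{\xi\in\mathcal{R}_\tau[t,1]}[\xi(t)^2]=\alpha$, in particular $\sim\alpha$. For (\ref{med}): at level $n$ let $\sigma_n$ be the involution of $\mathcal{R}_{\tau_n,n}[t_n,1]$ that flips the sign of the value at $t_n$ — legitimate exactly because that coordinate is free. Since $\sigma_n$ is a bijection and $\zeta(t_n)+\sigma_n(\zeta)(t_n)=0$, pairing $\zeta$ with $\sigma_n(\zeta)$ gives $\sum_{\zeta\in\mathcal{R}_{\tau_n,n}[t_n,1]}\zeta(t_n)=0$ for every $n$. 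Passing to $\alpha$-values, $\sum_{\xi\in\mathcal{R}_\tau[t,1]}\xi(t)=0$, hence $\mathbb{E}_{\xi\in\mathcal{R}_\tau[t,1]}[\xi(t)]=0\sim0$.

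No step here is genuinely hard; the only thing needing attention is the transfer bookkeeping — one must check that a sequence of finite sets of equal cardinality yields at $\mathbf{\alpha}$ a hyperfinite set of that hyperfinite cardinality, that a sequence of constant functions yields a constant internal function, and that a sequence of vanishing finite sums yields a vanishing hyperfinite sum. Each is immediate from the Internal Set and Composition Axioms applied to the relevant sequence, so no real obstacle arises; everything else is finite-set arithmetic.
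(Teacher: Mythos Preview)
Your proof is correct and follows essentially the same approach as the paper's: both establish the three claims by the elementary combinatorics of sign strings (equal-size fibers independent of the prefix, cancellation of $\pm\sqrt{\alpha}$ for the mean, and $\xi(t)^2\equiv\alpha$ for the second moment). The only difference is presentational---you work explicitly at level $n$ and transfer, while the paper computes directly at the $\alpha$ level (writing $|\mathcal{R}_\tau[t,1]|=2^{\alpha(1-t)+1}$ and splitting the sum for the mean into the $+\sqrt\alpha$ and $-\sqrt\alpha$ halves); your added care in isolating that the coordinate at $t$ is free is a point the paper leaves implicit.
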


\begin{proof}
The proof is almost immediate: first of all we have that%
\begin{equation*}
\left\vert \mathcal{R}_{\tau }\left[ t,1\right] \right\vert =2^{\alpha
(1-t)+1};
\end{equation*}%
moreover%
\begin{eqnarray*}
\mathbb{E}_{\xi \in \mathcal{R}_{\tau }\left[ t,1\right] }\left[ \xi (t)%
\right] &=&\frac{1}{\left\vert \mathcal{R}_{\tau }\left[ t,1\right]
\right\vert }\dsum\limits_{\xi \in \mathcal{R}_{\tau }\left[ t,1\right] }\xi
(t) \\
&=&\frac{1}{2\left\vert \mathcal{R}_{\tau }\left[ t,1\right] \right\vert }%
\sqrt{\alpha }-\frac{1}{2\left\vert \mathcal{R}_{\tau }\left[ t,1\right]
\right\vert }\sqrt{\alpha }=0\sim 0
\end{eqnarray*}%
and%
\begin{equation*}
\mathbb{E}_{\xi \in \mathcal{R}_{\tau }\left[ t,1\right] }\left[ \xi (t)^{2}%
\right] =\frac{1}{\left\vert \mathcal{R}_{\tau }\left[ t,1\right]
\right\vert }\dsum\limits_{\xi \in \mathcal{R}_{\tau }\left[ t,1\right] }\xi
(t)^{2}=\alpha
\end{equation*}
\end{proof}

\bigskip

\begin{remark}
The conclusion of Th. \ref{fokk} hold not only if the "stochastic class" $%
\mathcal{R}$ is defined by (\ref{rb}), but for any class $\mathcal{R}$ which
satisfies the properties (\ref{ind}), (\ref{med}) and (\ref{var}). For
example we can take%
\begin{equation*}
\mathcal{R=R}_{\alpha };\ \ \mathcal{R}_{n}:=\left\{ q_{1}\sqrt{n},....,q_{k}%
\sqrt{n}\right\} ^{\left[ 0,1\right] _{\mathbb{H}_{n}}};\ \ k\in \mathbb{N}
\end{equation*}%
with $q_{i}\in \mathbb{R}^{\ast },$%
\begin{equation*}
\dsum\limits_{i=1}^{k}q_{i}=0;\ \dsum\limits_{i=1}^{k}q_{i}^{2}=1.
\end{equation*}
\end{remark}

The following two lemmas are a direct consequence of properties (\ref{ind}),
(\ref{med}) and (\ref{var}).

\begin{lemma}
\label{media} Let $G:\left[ 0,1\right] _{\mathbb{H}}\times \mathbb{R}^{\ast
}\times \mathbb{R}^{\ast }\rightarrow \mathbb{R}^{\ast }\mathbb{\ }$be any
internal function. Then, for every $t\in \left[ \bigcirc\hspace{-6.5pt}%
\varepsilon\ ,1\right] $
\begin{equation*}
\mathbb{E}_{\xi \in \mathcal{R}}\left[ G(t,x_{\xi }\left( t\right) ,\xi
\left( t\right) )\right] =\mathbb{E}_{\tau \in \mathcal{R}\left[ 0,t\right) }%
\mathbb{E}_{\xi \in \mathcal{R}_{\tau }\left[ t,1\right] }\left[ G(t,x_{\xi
}\left( t\right) ,\xi \left( t\right) )\right]
\end{equation*}
\end{lemma}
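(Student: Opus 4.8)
The plan is to reduce the asserted identity to the elementary fact that a sum over a finite set is unchanged when it is evaluated block by block along a partition, and then transport this to the hyperfinite setting through $\alpha$-values.

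First I would check that the quantities make sense. The map $\xi\mapsto x_{\xi}(t)$ is internal, because for each $n$ the finite difference scheme associated to (\ref{gina}) assigns to every noise in $\mathcal{R}_n$ a value $x_{\xi_n}(t_n)\in\mathbb{R}$, and $\xi\mapsto x_{\xi}(t)$ is the $\alpha$-value of this sequence of finite maps; likewise $\xi\mapsto\xi(t)$ is internal and $G$ is internal by hypothesis. Hence $\Phi(\xi):=G(t,x_{\xi}(t),\xi(t))$ is an internal function on the hyperfinite set $\mathcal{R}$, so $\mathbb{E}_{\xi\in\mathcal{R}}[\Phi]=|\mathcal{R}|^{-1}\sum_{\xi\in\mathcal{R}}\Phi(\xi)$ and the nested average on the right-hand side are well-defined hyperreal numbers. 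Next I would invoke the decomposition (\ref{decom}) with $s=t$, namely $\mathcal{R}=\bigcup_{\tau\in\mathcal{R}[0,t)}\mathcal{R}_{\tau}[t,1]$, noting that this union is disjoint: any $\xi\in\mathcal{R}_{\tau_{1}}[t,1]\cap\mathcal{R}_{\tau_{2}}[t,1]$ must agree with both $\tau_{1}$ and $\tau_{2}$ on $[0,t)_{\mathbb{H}}$, forcing $\tau_{1}=\tau_{2}$. Splitting the sum along this partition gives
\[
\sum_{\xi\in\mathcal{R}}\Phi(\xi)=\sum_{\tau\in\mathcal{R}[0,t)}\ \sum_{\xi\in\mathcal{R}_{\tau}[t,1]}\Phi(\xi).
\]
By property (\ref{ind}) of Proposition \ref{classofnoises}, $N:=|\mathcal{R}_{\tau}[t,1]|$ is independent of $\tau$, and summing the block cardinalities over the disjoint union yields $|\mathcal{R}|=|\mathcal{R}[0,t)|\cdot N$. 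Dividing the last display by $|\mathcal{R}|=|\mathcal{R}[0,t)|\cdot N$ and recognizing the inner mean $\mathbb{E}_{\xi\in\mathcal{R}_{\tau}[t,1]}[\Phi]=N^{-1}\sum_{\xi\in\mathcal{R}_{\tau}[t,1]}\Phi(\xi)$ gives
\[
\mathbb{E}_{\xi\in\mathcal{R}}[\Phi]=\frac{1}{|\mathcal{R}[0,t)|}\sum_{\tau\in\mathcal{R}[0,t)}\frac{1}{N}\sum_{\xi\in\mathcal{R}_{\tau}[t,1]}\Phi(\xi)=\mathbb{E}_{\tau\in\mathcal{R}[0,t)}\,\mathbb{E}_{\xi\in\mathcal{R}_{\tau}[t,1]}[\Phi],
\]
which is the assertion.

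The only point that genuinely needs care is the legitimacy of these set-theoretic and arithmetic manipulations for hyperfinite, rather than honestly finite, sums. I would dispose of this by running the whole argument at the level of the defining sequences: for each $n$ the set $\mathcal{R}_{n}$ is finite, the decomposition $\mathcal{R}_{n}=\bigcup_{\tau\in\mathcal{R}_{n}[0,t_{n})}(\mathcal{R}_{n})_{\tau}[t_{n},1]$ is an ordinary finite disjoint union, and the block splitting of the finite sum together with the cancellation of the common cardinality are trivial for each $n$; taking $\alpha$-values of both sides then transfers the identity verbatim, since hyperfinite sums are by definition such $\alpha$-values and $\mathbb{E}$ is defined through $\alpha$-values of finite means. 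Thus the argument is essentially bookkeeping; the single substantive ingredient is (\ref{ind}), the $\tau$-independence of $|\mathcal{R}_{\tau}[t,1]|$, which is precisely what allows the factor $N$ to be pulled out of the outer average — without it the two sides would differ by a $\tau$-dependent weighting.
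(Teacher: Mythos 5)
Your proof is correct and follows essentially the same route as the paper: decompose $\mathcal{R}=\bigcup_{\tau\in\mathcal{R}[0,t)}\mathcal{R}_{\tau}[t,1]$, split the sum over this disjoint union, and use the $\tau$-independence of $\left\vert \mathcal{R}_{\tau}[t,1]\right\vert$ (property (\ref{ind})) to factor $\left\vert \mathcal{R}\right\vert=\left\vert \mathcal{R}[0,t)\right\vert\cdot\left\vert \mathcal{R}_{\tau}[t,1]\right\vert$ and regroup into the nested means. Your additional remarks on internality and on transferring the finite-sum bookkeeping via $\alpha$-values only make explicit what the paper leaves implicit.
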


\begin{proof}
By (\ref{decom}), we have that$\ $%
\begin{equation*}
\mathcal{R=}\bigcup_{\tau \in \mathcal{R}\left[ 0,t\right) }\mathcal{R}%
_{\tau }\left[ t,1\right]
\end{equation*}%
Then,%
\begin{eqnarray*}
\mathbb{E}_{\xi \in \mathcal{R}}\left[ G\left( t,x_{\xi }\left( t\right)
,\xi \left( t\right) \right) \right] &=&\frac{1}{\left\vert \mathcal{R}%
_{\tau }\left[ t,1\right] \right\vert \cdot \left\vert \mathcal{R}\left[
0,t\right) \right\vert }\sum_{\tau \in \mathcal{R}\left[ 0,t\right)
}\sum_{\xi \in \mathcal{R}_{\tau }\left[ t,1\right] }G\left( t,x_{\xi
}\left( t\right) ,\xi \left( t\right) \right) \\
&=&\frac{1}{\left\vert \mathcal{R}\left[ 0,t\right) \right\vert }\sum_{\tau
\in \mathcal{R}\left[ 0,t\right] }\frac{1}{\left\vert \mathcal{R}_{\tau }%
\left[ t,1\right] \right\vert }\sum_{\xi \in \mathcal{R}_{\tau }\left[ t,1%
\right] }G\left( t,x_{\xi }\left( t\right) ,\xi \left( t\right) \right) \\
&=&\mathbb{E}_{\tau \in \mathcal{R}\left[ 0,t\right) }\mathbb{E}_{\xi \in
\mathcal{R}_{\tau }\left[ t,1\right] }\left[ G(t,x_{\xi }\left( t\right)
,\xi \left( t\right) )\right] .
\end{eqnarray*}
\end{proof}

\begin{lemma}
\label{ganzo} Let $F:\left[ 0,1\right] _{\mathbb{H}}\times \mathbb{R}^{\ast
}\rightarrow \mathbb{R}^{\ast }\mathbb{\ }$be an internal function such that
$|F(t,x)|\leq M,\ M\in \mathbb{R}$. Then, for every $t\in \left[ 0,1\right] $
\begin{eqnarray*}
\mathbb{E}_{\xi \in \mathcal{R}}\left[ F(t,x_{\xi }(t))\cdot \xi (t)\right]
&\sim &0. \\
\mathbb{E}_{\xi \in \mathcal{R}}\left[ F(t,x_{\xi }(t))\cdot \xi (t)^{2}%
\right] &\sim &\alpha \cdot \mathbb{E}_{\xi \in \mathcal{R}}\left[
F(t,x_{\xi }(t))\right]
\end{eqnarray*}
\end{lemma}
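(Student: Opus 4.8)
The key idea is to condition on the past of the noise, exactly as in Lemma \ref{media}, and then exploit that $F(t,x_\xi(t))$ depends only on $\xi$ restricted to $[0,t)$ (since the solution $x_\xi$ at time $t$ is built by a grid-recursion using only the values $\xi(s)$ for $s<t$), so it is \emph{constant} on each piece $\mathcal{R}_\tau[t,1]$ of the decomposition (\ref{decom}). First I would write, using Lemma \ref{media} with $G(t,u,v)=F(t,u)\cdot v$,
\begin{equation*}
\mathbb{E}_{\xi \in \mathcal{R}}\left[ F(t,x_{\xi }(t))\cdot \xi (t)\right] =\mathbb{E}_{\tau \in \mathcal{R}\left[ 0,t\right) }\mathbb{E}_{\xi \in \mathcal{R}_{\tau }\left[ t,1\right] }\left[ F(t,x_{\xi }(t))\cdot \xi (t)\right].
\end{equation*}
On the inner expectation, for fixed $\tau$, the value $x_\xi(t)$ equals $x_\tau(t)$ (it does not change as $\xi$ ranges over $\mathcal{R}_\tau[t,1]$, because the recursion (\ref{eqdiffin}) only uses $\xi$ on $[0,t)$), so $F(t,x_\xi(t))=F(t,x_\tau(t))$ can be pulled out of the inner expectation, leaving $F(t,x_\tau(t))\cdot \mathbb{E}_{\xi \in \mathcal{R}_\tau[t,1]}[\xi(t)]$.

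Now I invoke Proposition \ref{classofnoises}: by (\ref{med}), $\mathbb{E}_{\xi \in \mathcal{R}_\tau[t,1]}[\xi(t)]\sim 0$ — in fact it is exactly $0$ from the explicit computation in that proof. Hence the inner expectation is $0$ (or infinitesimal), and after averaging over $\tau \in \mathcal{R}[0,t)$ — a hyperfinite average of infinitesimals, hence infinitesimal — we conclude $\mathbb{E}_{\xi \in \mathcal{R}}[F(t,x_\xi(t))\xi(t)]\sim 0$. Here I should note the one point needing a word of care: the bound $|F(t,x)|\le M$ with $M\in\mathbb{R}$ real is what guarantees that a hyperfinite average of terms each infinitesimally close to $0$ — more precisely, each of the form $F(t,x_\tau(t))\cdot(\text{infinitesimal})$ with the first factor bounded by a \emph{real} constant — stays infinitesimal; without this uniform real bound the averaging step could fail.

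For the second identity I proceed identically, conditioning and pulling $F(t,x_\tau(t))$ out:
\begin{equation*}
\mathbb{E}_{\xi \in \mathcal{R}}\left[ F(t,x_{\xi }(t))\cdot \xi (t)^{2}\right] =\mathbb{E}_{\tau \in \mathcal{R}\left[ 0,t\right) }\left[ F(t,x_\tau(t))\cdot \mathbb{E}_{\xi \in \mathcal{R}_\tau[t,1]}\left[\xi(t)^2\right]\right].
\end{equation*}
By (\ref{var}), $\mathbb{E}_{\xi \in \mathcal{R}_\tau[t,1]}[\xi(t)^2]=\alpha$ exactly (again from the explicit computation), so the right side is $\alpha\cdot \mathbb{E}_{\tau \in \mathcal{R}[0,t)}[F(t,x_\tau(t))]$. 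It remains to identify this with $\alpha\cdot\mathbb{E}_{\xi\in\mathcal{R}}[F(t,x_\xi(t))]$, which again follows from the conditioning identity of Lemma \ref{media} applied to the function $G(t,u,v)=F(t,u)$ (independent of $v$): since $F(t,x_\xi(t))$ is constant on each $\mathcal{R}_\tau[t,1]$, its $\mathcal{R}$-average equals its $\mathcal{R}[0,t)$-average of the representative values $F(t,x_\tau(t))$. This gives the claimed relation. The main obstacle, such as it is, is purely bookkeeping: making precise that $x_\xi(t)$ depends only on $\xi|_{[0,t)}$ (a consequence of the explicit grid-recursion construction) and that the uniform real bound $M$ legitimizes the passage of $\sim$ through the hyperfinite average over $\tau$; both are routine but should be stated.
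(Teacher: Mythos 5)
Your proposal is correct and follows essentially the same route as the paper's own proof: condition via Lemma \ref{media}, pull $F(t,x_{\tau}(t))$ out of the inner expectation because $x_{\xi}(t)$ depends only on $\xi|_{[0,t)}$, apply (\ref{med}) and (\ref{var}), and use the real bound $M$ to pass $\sim$ through the hyperfinite average over $\tau$. You even make explicit the final identification $\mathbb{E}_{\tau \in \mathcal{R}[0,t)}\left[ F(t,x_{\tau}(t))\right] =\mathbb{E}_{\xi \in \mathcal{R}}\left[ F(t,x_{\xi}(t))\right]$, which the paper leaves implicit.
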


\begin{proof}
By lemma \ref{media}, we have that$\ $%
\begin{equation*}
\mathbb{E}_{\xi \in \mathcal{R}}\left[ F(t,x_{\xi }(t))\cdot \xi (t)\right] =%
\mathbb{E}_{\tau \in \mathcal{R}\left[ 0,t\right) }\mathbb{E}_{\xi \in
\mathcal{R}_{\tau }\left[ t,1\right] }\left[ F(t,x_{\xi }(t))\cdot \xi (t)%
\right]
\end{equation*}%
Since $x_{\xi }(t)$ does not depend on $\xi (s)$ for $s>t,$ we have that%
\begin{eqnarray*}
\mathbb{E}_{\xi \in \mathcal{R}_{\tau }\left[ t,1\right] }\left[ F(t,x_{\xi
}(t))\cdot \xi (t)\right] &=&\frac{1}{\left\vert \mathcal{R}_{\tau }\left[
t,1\right] \right\vert }\sum_{\tau \in \mathcal{R}\left[ 0,t\right] }\left(
F(t,x_{\xi }(t))\cdot \xi (t)\right) \\
&=&F(t,x_{\xi }(t))\cdot \frac{1}{\left\vert \mathcal{R}_{\tau }\left[ t,1%
\right] \right\vert }\sum_{\tau \in \mathcal{R}\left[ 0,t\right] }\xi (t) \\
&=&F(t,x_{\xi }(t))\cdot \mathbb{E}_{\xi \in \mathcal{R}_{\tau }\left[ t,1%
\right] }\left[ \xi (t)\right]
\end{eqnarray*}%
Then since $F$ is bounded, by (\ref{med}), we get the conclusion:%
\begin{equation*}
\mathbb{E}_{\xi \in \mathcal{R}}\left[ F(t,x_{\xi }(t))\xi (t)\right] =%
\mathbb{E}_{\tau \in \mathcal{R}\left[ 0,t\right) }\left[ F(t,x_{\xi
}(t))\cdot \mathbb{E}_{\xi \in \mathcal{R}_{\tau }\left[ t,1\right] }\left[
\xi (t)\right] \right] \sim 0
\end{equation*}

Analogously, we have that
\begin{equation*}
\mathbb{E}_{\xi \in \mathcal{R}}\left[ F(t,x_{\xi }(t))\xi (t)^{2}\right] =%
\mathbb{E}_{\tau \in \mathcal{R}\left[ 0,t\right) }\left[ F(t,x_{\xi }(t))%
\mathbb{E}_{\xi \in \mathcal{R}_{\tau }\left[ t,1\right] }\left[ \xi (t)^{2}%
\right] \right]
\end{equation*}%
and by (\ref{var}), we get that%
\begin{equation*}
F(t,x_{\xi }(t))\mathbb{E}_{\xi \in \mathcal{R}_{\tau }\left[ t,1\right] }%
\left[ \xi (t)^{2}\right] =F(t,x_{\xi }(t))\left( \alpha +\varepsilon _{\tau
}\right)
\end{equation*}%
where $\varepsilon _{\tau }\sim 0.$ Then%
\begin{eqnarray*}
\mathbb{E}_{\xi \in \mathcal{R}}\left[ F(t,x_{\xi }(t))\xi (t)^{2}\right]
&=&\alpha \mathbb{E}_{\tau \in \mathcal{R}\left[ 0,t\right) }\left[
F(t,x_{\xi }(t))\right] +\mathbb{E}_{\tau \in \mathcal{R}\left[ 0,t\right) }%
\left[ F(t,x_{\xi }(t))\varepsilon _{\tau }\right] \\
&\sim &\alpha \mathbb{E}_{\tau \in \mathcal{R}\left[ 0,t\right) }\left[
F(t,x_{\xi }(t))\right]
\end{eqnarray*}

This concludes the proof.
\end{proof}

\bigskip

Now we see a basic property of the density function.

\begin{lemma}
\label{den}Let $\varphi \in \mathcal{D}\left( \left[ 0,1\right] \times
\mathbb{R}\right) $ and let $x_{\xi }(t),$ $\xi \in \mathcal{R},$ be the
family of solutions of a grid stochastic ODE. Then
\begin{equation*}
\mathbb{E}_{\xi \in \mathcal{R}}\left[ \varphi (t,x_{\xi }(t))\right] \sim
\bigcirc\hspace{-6.5pt}\varepsilon\ \sum_{x\in \mathbb{H}}\varphi (t,x)\rho
(t,x).
\end{equation*}%
In particular,
\begin{equation*}
\mathbb{E}_{\xi \in \mathcal{R}}\left[ \varphi (t,x_{\xi }(t))\right] \sim
\int \varphi (t,x)\rho (t,x)\Delta x.
\end{equation*}
\end{lemma}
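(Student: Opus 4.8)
The statement is essentially a bookkeeping identity: the expectation over the noise class $\mathcal{R}$ of $\varphi(t,x_\xi(t))$ should coincide, up to an infinitesimal, with the "integral against the empirical density" $\bigcirc\hspace{-6.5pt}\varepsilon\ \sum_{x\in\mathbb{H}}\varphi(t,x)\rho(t,x)$. The plan is to expand the right-hand side using the definition of $\rho$ and then reorganize the double sum so that it becomes a sum over $\xi\in\mathcal{R}$, which is exactly $\left\vert\mathcal{R}\right\vert$ times the left-hand side.

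\textbf{Step 1: unfold $\rho$.} By definition,
\begin{equation*}
\rho(t,x)=\frac{\left\vert\{x_\xi\in\mathcal{S}:x\leq x_\xi(t)<x+\bigcirc\hspace{-6.5pt}\varepsilon\ \}\right\vert}{\bigcirc\hspace{-6.5pt}\varepsilon\ \left\vert\mathcal{R}\right\vert},
\end{equation*}
so that
\begin{equation*}
\bigcirc\hspace{-6.5pt}\varepsilon\ \sum_{x\in\mathbb{H}}\varphi(t,x)\rho(t,x)=\frac{1}{\left\vert\mathcal{R}\right\vert}\sum_{x\in\mathbb{H}}\varphi(t,x)\left\vert\{\xi\in\mathcal{R}:x\leq x_\xi(t)<x+\bigcirc\hspace{-6.5pt}\varepsilon\ \}\right\vert.
\end{equation*}
Since the grid $\mathbb{H}$ is hyperfinite and the half-open intervals $[x,x+\bigcirc\hspace{-6.5pt}\varepsilon\ )$ for $x\in\mathbb{H}$ partition the hyperreal line relevant here (more precisely they partition the bounded part of $\mathbb{R}^\ast$ swept out by the solutions, and $\varphi$ has compact support so only finitely—hyperfinitely—many terms matter), each solution value $x_\xi(t)$ falls into exactly one such interval. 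Hence the sum over $x$ collapses to a single sum over $\xi$:
\begin{equation*}
\bigcirc\hspace{-6.5pt}\varepsilon\ \sum_{x\in\mathbb{H}}\varphi(t,x)\rho(t,x)=\frac{1}{\left\vert\mathcal{R}\right\vert}\sum_{\xi\in\mathcal{R}}\varphi(t,\lfloor x_\xi(t)\rfloor_{\mathbb{H}}),
\end{equation*}
where $\lfloor\cdot\rfloor_{\mathbb{H}}$ denotes the grid point $x$ with $x\leq x_\xi(t)<x+\bigcirc\hspace{-6.5pt}\varepsilon\ $.

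\textbf{Step 2: control the discretization error.} The right-hand side now differs from $\mathbb{E}_{\xi\in\mathcal{R}}[\varphi(t,x_\xi(t))]=\frac{1}{\left\vert\mathcal{R}\right\vert}\sum_{\xi\in\mathcal{R}}\varphi(t,x_\xi(t))$ only in that $x_\xi(t)$ has been replaced by the nearest grid point below it, which is at distance $<\bigcirc\hspace{-6.5pt}\varepsilon\ $. Since $\varphi\in\mathcal{D}$, its $x$-derivative is bounded (say by $M$ on the compact support), so $\left\vert\varphi(t,\lfloor x_\xi(t)\rfloor_{\mathbb{H}})-\varphi(t,x_\xi(t))\right\vert\leq M\bigcirc\hspace{-6.5pt}\varepsilon\ $ for every $\xi$ — this is a transfer of the mean value theorem, valid by the infinitesimal calculus developed earlier. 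Averaging over $\xi$ keeps the error bounded by $M\bigcirc\hspace{-6.5pt}\varepsilon\ \sim 0$, which gives $\mathbb{E}_{\xi\in\mathcal{R}}[\varphi(t,x_\xi(t))]\sim\bigcirc\hspace{-6.5pt}\varepsilon\ \sum_{x\in\mathbb{H}}\varphi(t,x)\rho(t,x)$, the first assertion. The "in particular" statement then follows immediately: the hyperfinite sum $\bigcirc\hspace{-6.5pt}\varepsilon\ \sum_{x\in\mathbb{H}}\varphi(t,x)\rho(t,x)$ is bounded (its shadow is finite because $\varphi$ has compact support and $\rho$ integrates to one over $x$), and by Definition~\ref{int} its shadow is precisely $\int\varphi(t,x)\rho(t,x)\,\Delta x$; so the left-hand side, being infinitesimally close to it, has the same shadow, which is what $\sim\int\varphi(t,x)\rho(t,x)\Delta x$ asserts.

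\textbf{Main obstacle.} The only genuinely delicate point is making Step 1 rigorous within $\alpha$-theory: one must justify that the map $x\mapsto\left\vert\{\xi\in\mathcal{R}:x\leq x_\xi(t)<x+\bigcirc\hspace{-6.5pt}\varepsilon\ \}\right\vert$ is internal (so that the hyperfinite sum makes sense) and that the intervals $[x,x+\bigcirc\hspace{-6.5pt}\varepsilon\ )_{x\in\mathbb{H}}$ genuinely partition the range of the (internal, hence hyperfinitely-valued) family $x_\xi(t)$ — equivalently, that no solution value escapes the grid. This is handled by transferring the obvious finite-level statement: at level $n$, the sets $\mathcal{R}_n$, the solutions, and the counting function are all finite/finitely-defined, the intervals $[\tfrac{k}{n},\tfrac{k+1}{n})$ do partition $[-\tfrac{n}{2},\tfrac{n}{2})$, and since $\varphi$ has compact support the sum is finite; taking $\alpha$-values of these sequences, the identity persists. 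The rest is routine.
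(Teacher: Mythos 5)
Your proposal is correct and follows essentially the same route as the paper's own proof: group the solutions $x_{\xi}(t)$ by the grid cell $[x,x+\mept)$ they fall in, replace $\varphi(t,x_{\xi}(t))$ by $\varphi(t,x)$ at the cell's grid point, and use the definition of $\rho$ to turn the cell counts into $\mept\,\left\vert\mathcal{R}\right\vert\rho(t,x)$. Your explicit uniform bound $M\mept$ on the replacement error (the paper only remarks that each $\eta_{\xi}(t)$ is infinitesimal before averaging) and your attention to internality are welcome refinements of the same argument, not a different one.
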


\begin{proof}
We have
\begin{eqnarray*}
\mathbb{E}_{\xi \in \mathcal{R}}\left[ \varphi (t,x_{\xi }(t))\right] &=&%
\frac{1}{\left\vert \mathcal{R}\right\vert }\sum_{\xi \in \mathcal{R}%
}\varphi (t,x_{\xi }(t))=\frac{1}{\left\vert \mathcal{R}\right\vert }%
\sum_{x\in \mathbb{H}}\left[ \sum_{x\leq x_{\xi }(t)<x+\bigcirc\hspace{-5.5pt%
}\varepsilon\ }\varphi (t,x_{\xi }(t))\right] \\
&=&\frac{1}{\left\vert \mathcal{R}\right\vert }\sum_{\xi \in \mathcal{R}%
}\varphi (t,x_{\xi }(t))=\frac{1}{\left\vert \mathcal{R}\right\vert }%
\sum_{x\in \mathbb{H}}\left[ \sum_{x\leq x_{\xi }(t)<x+\bigcirc\hspace{-5.5pt%
}\varepsilon\ }\left( \varphi (t,x)+\eta _{\xi }(t)\right) \right]
\end{eqnarray*}%
where $\eta _{\xi }(t):=\varphi (t,x_{\xi }(t))-\varphi (t,x)$ is
infinitesimal. Hence%
\begin{equation*}
\frac{1}{\left\vert \mathcal{R}\right\vert }\sum_{\xi \in \mathcal{R}}\eta
_{\xi }(t)\sim 0
\end{equation*}%
and so%
\begin{eqnarray*}
\mathbb{E}_{\xi \in \mathcal{R}}\left[ \varphi (t,x_{\xi }(t))\right] &=&%
\frac{1}{\left\vert \mathcal{R}\right\vert }\sum_{x\in \mathbb{H}}\left[
\sum_{x\leq x_{\xi }(t)<x+\bigcirc\hspace{-5.5pt}\varepsilon\ }\varphi (t,x)%
\right] \\
&=&\frac{1}{\left\vert \mathcal{R}\right\vert }\sum_{x\in \mathbb{H}}\left(
\varphi (t,x)\cdot \left\vert \{x_{\xi }\in \mathcal{S}:x\leq x_{\xi
}(t)<x+\bigcirc\hspace{-6.5pt}\varepsilon\ \right\vert \right) \\
&=&\frac{1}{\left\vert \mathcal{R}\right\vert }\sum_{x\in \mathbb{H}}\left[
\varphi (t,x)\cdot \bigcirc\hspace{-6.5pt}\varepsilon\ \left\vert \mathcal{R}%
\right\vert \rho (t,x)\right] =\bigcirc\hspace{-6.5pt}\varepsilon\
\sum_{x\in \mathbb{H}}\left[ \varphi (t,x)\rho (t,x)\right]
\end{eqnarray*}
\end{proof}

Now we can prove Theorem \ref{fokk}

\begin{proof}
Chosen an arbitrary $\varphi \in \mathcal{C}(\left[ 0,1\right] \times
\mathbb{R})$ bounded in the second variable, we have that
\begin{equation*}
\varphi (1,x_{\xi }(1))-\varphi (0,x_{0})=\bigcirc\hspace{-6.5pt}%
\varepsilon\ \sum_{t\in \left[ 0,1-\bigcirc\hspace{-6.5pt}\varepsilon\ %
\right] _{\mathbb{H}}{}}\frac{\Delta \varphi }{\Delta t}(t,x_{\xi }(t)),
\end{equation*}%
Now we assume that $\varphi \in \mathcal{D}(\left[ 0,1\right) \times \mathbb{%
R})).$ Since $x_{\xi }$ solves eq. (\ref{gina}), we can apply Theorem \ref%
{ITO}, and we obtain
\begin{eqnarray}
-\varphi (0,x_{0}) &\sim &\bigcirc\hspace{-6.5pt}\varepsilon\ \sum_{t\in %
\left[ 0,1\right) _{\mathbb{H}}{}}\left[ \varphi _{t}+\varphi _{x}\cdot
\frac{\Delta x}{\Delta t}+\frac{\bigcirc\hspace{-6.5pt}\varepsilon\ }{2}%
\varphi _{xx}\cdot \left( \frac{\Delta x}{\Delta t}\right) ^{2}\right] =
\notag \\
&\sim &\bigcirc\hspace{-6.5pt}\varepsilon\ \sum_{t\in \left[ 0,1\right) _{%
\mathbb{H}}{}}\left[ \varphi _{t}+\varphi _{x}\cdot \left( f+h\xi \right) +%
\frac{\bigcirc\hspace{-6.5pt}\varepsilon\ }{2}\varphi _{xx}\cdot \left(
f+h\xi \right) ^{2}\right] =  \label{70} \\
&=&\bigcirc\hspace{-6.5pt}\varepsilon\ \sum_{t\in \left[ 0,1\right) _{%
\mathbb{H}}{}}\left( \varphi _{t}+f\varphi _{x}\right) +\left( \varphi
_{x}h+\bigcirc\hspace{-6.5pt}\varepsilon\ \varphi _{xx}f\right) \xi +\frac{%
\bigcirc\hspace{-6.5pt}\varepsilon\ }{2}\varphi _{xx}f+\frac{\bigcirc\hspace{%
-6.5pt}\varepsilon\ }{2}\varphi _{xx}h^{2}\xi ^{2}  \notag
\end{eqnarray}

Now we want to compute $\mathbb{E}_{\xi \in \mathcal{R}}$ of each piece of
the right hand side of the above equation: by lemma \ref{den},%
\begin{equation}
\mathbb{E}_{\xi \in \mathcal{R}}\left[ \varphi _{t}+f\varphi _{x}\right]
=\bigcirc\hspace{-6.5pt}\varepsilon\ \sum_{x\in \mathbb{H}}\left[ \varphi
_{t}+f\varphi _{x}\right] \rho  \label{71}
\end{equation}%
for every $t\in \left[ 0,1\right] .$

Let us consider the second piece: $\left( \varphi _{x}h+\bigcirc\hspace{%
-6.5pt}\varepsilon\ \varphi _{xx}f\right) \xi .$ If we set
\begin{equation*}
F(t,x_{\xi }(t))=\varphi _{x}(t,x_{\xi }(t))h(t,x_{\xi }(t))+\bigcirc\hspace{%
-6.5pt}\varepsilon\ \varphi _{xx}(t,x_{\xi }(t))f(t,x_{\xi }(t))
\end{equation*}%
it turns out that $F(t,x_{\xi }(t))$ is bounded. In fact if $x_{\xi }(t)$ is
bounded, $\varphi _{x}(t,x_{\xi }(t))$, $\ h(t,x_{\xi }(t))$, $\varphi
_{xx}(t,x_{\xi }(t))$ and $f(t,x_{\xi }(t))$ are bounded since they are
standard functions; if $x_{\xi }(t)$ is unbounded, $\varphi _{x}(t,x_{\xi
}(t))=0.$ Then we can apply lemma \ref{ganzo} and we get that
\begin{equation}
\mathbb{E}_{\xi }\left[ \left( \varphi _{x}h+\bigcirc\hspace{-6.5pt}%
\varepsilon\ \varphi _{xx}f\right) \xi \right] \sim 0.  \label{73}
\end{equation}%
Moreover
\begin{equation}
\mathbb{E}_{\xi }\left[ \frac{\bigcirc\hspace{-6.5pt}\varepsilon\ }{2}%
\varphi _{xx}f\right] \sim 0  \label{74}
\end{equation}%
since $\frac{\bigcirc\hspace{-5.5pt}\varepsilon\ }{2}\varphi _{xx}f\sim 0$.

Finally, let us see the last term:$\ \frac{\bigcirc\hspace{-5.5pt}%
\varepsilon\ }{2}\varphi _{xx}h^{2}\xi ^{2}.$ We can see that $\frac{\bigcirc%
\hspace{-5.5pt}\varepsilon\ }{2}\varphi _{xx}h^{2}$ is bounded (actually, it
is infinitesimal); then we can apply lemma \ref{ganzo}\ with $F(t,x_{\xi
}(t))=\frac{\bigcirc\hspace{-5.5pt}\varepsilon\ }{2}\varphi _{xx}(t,x_{\xi
}(t))h(t,x_{\xi }(t))^{2}$:

\begin{eqnarray}
\mathbb{E}_{\xi }\left[ \frac{\bigcirc\hspace{-6.5pt}\varepsilon\ }{2}%
\varphi _{xx}h^{2}\xi ^{2}\right] &\sim &\alpha \mathbb{E}_{\xi }\left[
\frac{\bigcirc\hspace{-6.5pt}\varepsilon\ }{2}\varphi _{xx}h^{2}\right]
\notag \\
&=&\frac{1}{2}\mathbb{E}_{\xi }\left[ \varphi _{xx}h^{2}\right]  \notag \\
&=&\frac{1}{2\left\vert \mathcal{R}\right\vert }\sum_{\xi \in \mathcal{R}%
}\varphi _{xx}(t,x\left( t\right) )h(t,x\left( t\right) )^{2}  \label{75}
\end{eqnarray}%
Then, by lemma \ref{den}%
\begin{equation*}
\mathbb{E}_{\xi }\left[ \frac{\bigcirc\hspace{-6.5pt}\varepsilon\ }{2}%
\varphi _{xx}h^{2}\xi ^{2}\right] \sim \frac{\bigcirc\hspace{-6.5pt}%
\varepsilon\ }{2}\sum_{x\in \mathbb{H}}\varphi _{xx}(t,x)h(t,x)^{2}\rho (t,x)
\end{equation*}

Then, by (\ref{70}),...,(\ref{75})
\begin{eqnarray*}
-\varphi (0,x_{0}) &=&\mathbb{E}_{t,\xi }\left[ -\varphi (0,x_{0})\right] \\
&\sim &\bigcirc\hspace{-6.5pt}\varepsilon\ \sum_{t\in \left[ 0,1\right) _{%
\mathbb{H}}{}}\Big(\mathbb{E}_{\xi }\left[ \varphi _{t}+f\varphi _{x}\right]
+\mathbb{E}_{\xi }\left[ \left( \varphi _{x}h+\bigcirc\hspace{-6.5pt}%
\varepsilon\ \varphi _{xx}f\right) \xi \right] \Big)+ \\
&&+\bigcirc\hspace{-8.5pt}\varepsilon\ \sum_{t\in \left[ 0,1\right) _{%
\mathbb{H}}{}}\Big(\mathbb{E}_{\xi }\left[ \frac{\bigcirc\hspace{-6.5pt}%
\varepsilon\ }{2}\varphi _{xx}f\right] +\mathbb{E}_{\xi }\left[ \frac{%
\bigcirc\hspace{-6.5pt}\varepsilon\ }{2}\varphi _{xx}h^{2}\xi ^{2}\right] %
\Big) \\
&\sim &\bigcirc\hspace{-6.5pt}\varepsilon\ ^{2}\sum_{t\in \left[ 0,1\right)
_{\mathbb{H}}{}}\left( \sum_{x\in \mathbb{H}}\left( \varphi _{t}+f\varphi
_{x}\right) \rho +\varphi _{xx}h^{2}\rho \right) \\
&\sim &\int \int \left( \varphi _{t}+f\varphi _{x}+\varphi _{xx}h^{2}\right)
\rho \ dxdt
\end{eqnarray*}

Since the first and the last term of this equation are standard, we get eq. (%
\ref{jessica}).
\end{proof}

\bigskip

\section{Conclusion}

\bigskip

In this section we will make some comments suggested by the results of this
paper.

\bigskip

The first comment is that,

\begin{itemize}
\item \emph{in many applications of Nonstandard Analysis, only elementary
facts and techniques of nonstandard calculus seem to be necessary. }
\end{itemize}

In fact $\alpha $-theory is much simpler than the usual Nonstandard
Analysis, but it seems absolutely adequate to treat the kind of problems
considered here.

However, if you compare this paper with \cite{and}, \cite{kei} and \cite{nel}%
, the reason why this paper is much simpler lies in the fact that we have
not made a nonstandard theory of the \textit{stochastic differential
equation, }but rather, we have replaced them with the \textit{stochastic
grid equations }which are much simpler mathematical objects; in this
contest the Ito integral is replaced by the $\alpha $-integral and the proof
of the key point of this theory, the Ito formula, reduces to an exercise.

The reason why a much simpler object can represent the processes of
diffusion at microscopic level is that we have taken the infinitesimals
seriously and we have used them to model an aspect of the "physical
reality". We think that in general,

\begin{itemize}
\item \emph{the advantages of a theory which includes infinitesimals rely
more on the possibility of making new models rather than in the
demonstration techniques.}
\end{itemize}

In the case considered in this paper, the way the model has been constructed
appears quite natural: the \textit{stochastic grid equations,} which need
the notion of infinitesimals, describe the diffusion processes at
microscopic level; the Fokker-Plank equation describes the diffusion
processes at the macroscopic level and uses standard differential equations
(and the theory of distributions when the data are not regular).

The connection of these two levels is given by eq. (\ref{d}) which relates
grid function (microscopic level) with distributions (macroscopic level).

A final remark concerns the \emph{theory of probability}. We have not used
the notion of probability to show that every thing can be kept to a very
elementary level and that our variant of the Ito formula makes sense also in
a context where probability does not appear.

However probability can be introduced in a very elementary way. We may think
of the stochastic class $\mathcal{R}$ as a sample space. The events are the
hyperfinite sets $E\subset \mathcal{P}\left( \mathcal{R}\right) ^{\ast }$
and the probability $P$ of an event is given by%
\begin{equation}
P\left( E\right) =\frac{\left\vert E\right\vert }{\left\vert \mathcal{R}%
\right\vert }.  \label{pro}
\end{equation}

For example $E$ might be the event that at a time $t_{0},$ the moving
particle lies in the interval $\left[ a,b\right] ,$ namely%
\begin{equation*}
E=\left\{ x_{\xi }\in \mathcal{S}:x_{\xi }\left( t_{0}\right) \in \left[ a,b%
\right] _{\mathbb{H}}\right\}
\end{equation*}%
In this case we have that%
\begin{equation*}
P\left( E\right) \sim \int_{a}^{b}\rho (t,x)\ \Delta x
\end{equation*}

Obviously, this is the most natural extension to infinite sample spaces of
the classical definition of probability. There is no doubt that definition (%
\ref{pro}), is the most simple and intuitive definition of probability. The
price which you pay is that $P$ takes its values in $\mathbb{Q}^{\ast }$ and
not in $\mathbb{R};$ thus, a probability space $\left( \mathcal{R},P\right) $
is an internal object and, if it is infinite, it is not standard. The
problem arise if you want to connect the nonstandard world with the standard
one. This operation can be done in a very elegant way via the Loeb integral
(see \cite{loeb} or also \cite{kei}). A different way to make easy the
connection between this two worlds has been proposed by Nelson \cite{nel}.
However, we think that, in most cases, it is not necessary to make this
connection. Only at the very end you may consider the shadow of the numbers
which you have obtained. At least this can be done in the theory which we
have exposed in this paper. So, if we accept a mathematical description in
which the infinitesimal exist,

\begin{itemize}
\item \emph{the probability of an event is given by a hyperrational number,}
\end{itemize}

\noindent and many theorems become simpler. This scheme avoids also some
facts in probability theory which are in contrast with the common sense: for
example the fact that the union of impossible events gives a possible one
(the probability of a non-denumerable set might be non-null even if the
probability of each singleton is null).

\bigskip

Concluding the final remark of this paper could be the following one:

\begin{itemize}
\item \emph{the infinitesimals should be taken seriously.}
\end{itemize}

\bigskip

\bigskip

\end{document}